\theoremstyle{plain}
\newtheorem{thm}{Theorem}[section]
\newtheorem{lem}[thm]{Lemma}
\newtheorem{prop}[thm]{Proposition}
\newcommand*\closure[1]{\overline{#1}}
\theoremstyle{definition}
\newtheorem{defn}{Definition}[section]
\theoremstyle{remark}
\theoremstyle{plain}
\newtheorem{theorem}{Theorem}[section]
\newtheorem{lemma}[theorem]{Lemma}
\newglossaryentry{MS}{
	name        = MS ,
	description = mass spectroscopy ,
	type        = abbrev
}
\renewcommand{\geq}{\geqslant}
\renewcommand{\leq}{\leqslant}
\theoremstyle{plain}
\theoremstyle{definition}
\numberwithin{equation}{section}
\theoremstyle{assumption}
\newtheorem{assumption}{Assumption}
\begin{document}
\singlespacing
\thispagestyle{empty}
\begin{center}
      {\Large Estimation of the number of negative eigenvalues of magnetic Schr\"{o}dinger operators in a strip}\\
      \vspace{1.8in}
      {\large Sorowen Ben}\\
      \vspace{0.15in}
      {\large 2018/MSc/019/PS}\\
      \vspace{0.15in}
      {\large BSc. Edu, Kyambogo University, 2014.}\\
      \vspace{1.5in}
      {\Large A dissertation submitted in partial fulfilment of the requirements leading to the award of the degree of Master of Science in Mathematics of Mbarara University of Science and Technology }
\end{center}
\vspace{0.4in}
\begin{center} 
    February 11, 2021  
\end{center}
 \frontmatter
 \newpage
 \clearpage

\section*{Declaration}
     I hereby declare that the work  presented in this dissertation is my own and has not been presented to this institution or any other institution for any award. I confirm that where I have used materials from other sources, references have been made and with the exception of such, this dissertation is my original work.\\
     
     Name:\rule{14cm}{0.002in}\\

     Signature:\rule{7cm}{0.002in}
     Date:\rule{5cm}{0.002in}\\

\newpage
\section*{Approval}
     The research work culminating in this dissertation was conducted under my  guidance and supervision.

     Dr. Martin Karuhanga\\
     Department of Mathematics\\
     Faculty of Science\\
     Mbarara University of Science and Technology\\

     Signature:\rule{7cm}{0.002in}
     Date:\rule{5cm}{0.002in}\\

\newpage
\section*{Dedication}
To my Parents  Mr. Ben Saik and Mrs. Kapcherop Janerose  Saik.

\newpage
\section*{Acknowledgment}
       I thank the Almighty God for enabling me complete my course  successful and this dissertation. 
       
       Great thanks go to my supervisor Dr. Martin Karuhanga for the useful discussions and valuable time, mentorship and inspirations rendered to me from time to time. I am grateful to Dr. Feresiano for the guidance he gave me while compiling this dissertation and I extend my sincere thanks to the staff members of department of mathematics of Mbarara University of Science and Technology and that of Kyambogo University for their guidance support.
       
       I am greatly Indebted to the Pure mathematicians Dr. John Emenyu, Mr. Kimuli Philly Ivan (PhD candidate at Makerere University), Mr. Wilbroad Bezire (PhD student at Moi University) and Mr. Sentayi Mbidde Abdulwahabu, the physicists Mr. Habumugisha Isaac, Mr. Ndugu Nelson (both PhD candidates at Mbarara University of Science and Technology) and Mr. Andama Geoffrey (PhD student at Mbarara University of Science and Technology), Dr. Joseph Ssenyonga, the laboratory technicians of Mbarara University of Science and Technology Mr. Cherop Tonny, Mr. Johnathan Baguma and Mr. Araka Nickson, the head of department of mathematics at Mbarara University Prof. Julius Tumwiine, and that of Kyambogo University Dr. Hasifa Nampala and the former head of department of mathematics Kyambogo University Mr. Tengi Jacob for their support and guidance.
       
       I owe sincere thanks to the administration of Kyambogo University, that of Mbarara university of Science and Technology and that of the Regional Universities Forum for Capacity Building in Agriculture (RUFORUM) for the study scholarship under the Graduate Teacher Assistant programme.
       
       Finally, I appreciate my brothers, sisters and parents for their love, support and perseverance exercised during my study.

\newpage
\section*{Abstract}
     An upper estimate for the number of negative eigenvalues below the essential spectrum for the magnetic Schr\"{o}dinger operator with Aharonov-Bohm magnetic field in a strip is obtained.

     Its further shown that the estimate does not hold in absence of Aharonov-Bohm magnetic field. 

\newpage
\section*{Basic Notation}
\begin{itemize}
	\item $\mathbb{R}$ denotes a field of real numbers.
	\item $\mathbb{R}^{n}$ denotes the Euclidean $n$-space.
	\item $\Omega$ denotes an arbitrary open set.
	\item $\mathcal{H}$ denotes a complex Hilbert space.
	\item $W^{m,p}(\Omega)$ denotes a Sobolev space and when $p=2$ it takes the standard notation $H^{2}(\Omega)$.
	\item $L^{p}(\Omega)$ denotes a function space on $\Omega$.
	\item $l^{p}(\Omega)$ denotes a sequence space on $\Omega$.
	\item $C^{1}(\Omega)$ denotes the space of continuously differentiable functions on $\Omega$.
	\item $C^{n}(\Omega)$ denotes a unitary space on $\Omega$.
	\item $\closure{\Omega}$ denotes closure of $\Omega$.
	\item $C_{0}^{\infty}(\Omega)$ denotes infinitely differentiable functions with compact support in $\Omega$.
	\item $L^{2}(\mathbb{R}^{n})$ denotes the space of square integrable functions on $\mathbb{R}^{n}$.
	\item $\partial \Omega$ denotes the boundary of an open set $\Omega$.
	\item $\nabla$ defines the gradient.
	\item $T^{*}$ denotes the adjoint of a linear differential operator $T$.
	\item $\langle.,.\rangle$ denotes the inner product.
	\item $\|.\|$ denotes the norm of a bounded linear operator.
	\item $L_{loc}^{1}(\Omega)$ denotes  the space of locally integrable functions  in $\Omega$.
	\item $Neg(H)$ denotes the number of negative eigenvalues for the operator $H$.
	\item $\mathcal{D}(T)$ denotes the domain of an operator $T$.
	\item $I$ denotes the identity operator.
	\item $a\wedge b$ denotes the exterior product of $a$ and $b$.
\end{itemize}

\newpage
\tableofcontents
\mainmatter
\begin{chapter}{Introduction}
\begin{section}{Background}
	  The study of plasma, nuclear, solid state, semiconductor physics, and astrophysics \cite{bonitz2019quantum,goswami2020energy,janev2016review,rezaei2016saturation,weisheit1989atomic} requires the knowledge about the number of bound-states solutions of the Schr\"{o}dinger equation.
      The differential operator appearing in the Schr\"{o}dinger equation is called Schr\"{o}dinger operator and has got deep roots in non-relativistic quantum mechanics \cite{capri2002nonrelativistic}. The Schr\"{o}dinger operator in a specific quantum system offers a  full characterization of its physical system. Eigenvalues and eigenfunctions of an operator usually have a direct physical interpretation of the physical system it models, for example, eigenvalue estimates for magnetic Schr\"{o}dinger operators play an important role in the study of asymptotic properties of the discrete spectrum \cite{last2005spectral}, scattering theory \cite{taylor2006scattering}, and stability of matter in magnetic fields \cite{lieb1997stability}. This therefore makes the study of spectral properties of Schr\"{o}dinger operators an interesting object in Mathematical Physics.
      
      A number of mathematicians inspired by the need to come up with generalizations of the Schr\"{o}dinger operator in the 1950s, posed a number of questions concerning the Schr\"{o}dinger operator $H_{V}=-\Delta+V$, $V\geq 0$ defined on a Hilbert space $L^{2}(\mathbb{R}^{n})$. Such questions include defining the Schr\"{o}dinger operator as a self adjoint operator in a proper Hilbert space and determining its spectral properties on quantitative and qualitative levels.
      
      Complete set of answers to these questions were found with regularity conditions on the potential and more relaxed with time \cite{reed1978analysis}. Some result such as bounds on the number of negative eigenvalues of the operator were obtained by  Rozenblum \cite{rozenblum1972distribution}, Lieb \cite{lieb1976bounds} and  Cwikel \cite{cwikel1977weak}. Today, this result is known as the Cwikel-Lieb-Rozenblum  inequality (CLR) \cite{balinsky2011spectral}.
      
      Later, the researchers directed their interest also to the operators involving both electric field  potential and the magnetic fields such as the magnetic  Schr\"{o}dinger operators and the Pauli operators. Earlier studies found in the work of Avron et al. \cite{avron1978schrodinger}, Reed\cite{reed1978analysis}, Simon\cite{simon1979functional},  Leinfelder and Simader\cite{leinfelder1981schrodinger} \& Melgaard and Rozenblum \cite{melgaard1996spectral} lead to obtaining the analogous of the CLR inequality for the magnetic inequality. Spectral theory problems of the  magnetic Schr\"{o}dinger operator to date have not been fully characterized  such as regularity of the potential discussed in view of Barseghyan and Truc \cite{barseghyan2017magnetic}.
      Problems such as obtaining the bounds for the number of negative eigenvalues for the Schr\"{o}dinger operator in two dimension  to this moment is  still an open problem (see \cite{grigor2014lower,karuhanga2018eigenvalue,karuhanga2019discrete,karuhanga2020negative,shargorodsky2014negative} and the references therein).
      
      The spectral theory of Schr\"{o}dinger operators has during the past been restricted  only to the qualitative questions of spectral theory, self adjoint realization, semi-boundedness, discreteness of the whole of the negative spectrum while the quantitative results on the spectrum lag behind to this moment.      
      
      In this study, the question of obtaining an estimate for the number of negative eigenvalues for the magnetic Schr\"{o}dinger operator in a strip subject to Neumann boundary conditions is considered. A related problem was solved in \cite{balinsky2001number} in the case of the Aharonov-Bohm magnetic field represented by a Dirac delta function and in \cite{kovarik2011eigenvalue} for a reasonably large class of magnetic fields. However, in the two cases considered, they do not consider the case of a strip where the boundary conditions affect the bottom of the essential spectrum and estimates obtained in \cite{balinsky2001number} fail in regular fields.
      
      One of the tools used to obtain the bound for the number of eigenvalues is the diamagnetic inequality. In \cite{rozenblum2001semigroup}, its pointed out that using only the  diamagnetic inequality alone to derive eigenvalue estimates for the magnetic Schr\"{o}dinger operator from the non-magnetic ones would, in particular, lead automatically to estimates not depending on the magnetic field. Thus obtaining eigenvalue estimates for the magnetic Schr\"{o}dinger operator becomes a separate problem, and many methods used in the non-magnetic situation do not work in the magnetic case. In this study the Hardy inequality, Bargmann type inequality and the Sturm-Liouville operator theory are particularly used to obtain the eigenvalue estimates for the magnetic Schr\"{o}dinger operator in a strip with Aharonov-Bohm type magnetic field.
\end{section}
\begin{section}{Structure of the thesis}
	In Chapter 1, we give a brief background to the problem and define the magnetic Schr\"{o}dinger operator, we give the statement of the problem, state the aims, objectives and justification of the study. We also review the basic facts in spectral theory and introduce the quadratic form of a self-adjoint operator.
	
	In Chapter 2, we review some of the preliminaries, auxilliary results and present the literature review. 
	
	In Chapter 3, we obtain an estimate for the number of negative eigenvalues of a magnetic Schr\"{o}dinger operator in a strip with Aharonov-Bohm magnetic field. The estimate only holds in the presence of magnetic field of the Aharonov-Bohm type with Neumann conditions subjected at the boundary of the strip, and the magnetic flux satisfying the non-integer condition.
	
	In Chapter 4, we give some recommendations and suggestions for further study.
\end{section}
\begin{section}{Magnetic Schr\"{o}dinger operators}
\begin{subsection}{Magnetic Laplacian}
    The classical Laplace operator $H_{0}=-\Delta$  defined by
    \begin{equation*}
    H_{0} =-\Delta=-\sum_{i=1}^{n}\frac{\partial^{2}}{\partial x_{i}^{2}} \text{ on }  L^{2}(\mathbb{R}^{n})
    \end{equation*} 
    is a positive self-adjoint operator whose  spectrum is real and its essential spectrum is absolutely continuous, that is,  $\sigma(-\Delta)=[0,\infty)$. The discrete spectrum is empty \cite{last2005spectral}, describes the energy of a particle in an electric field.
    
    To describe the energy of a particle in an external magnetic field $B:\mathbb{R}^{n}\to \mathbb{R}^{n}$, the Laplacian $\Delta=\nabla^{2}$ is replaced by $(\frac{1}{i}\nabla+A)^{2}$, where the vector potential $A:\mathbb{R}^{n}\to \mathbb{R}^{n}$ satisfies $curl A=B$ \cite{hansson2007spectral}. We denote  the magnetic Laplacian by $H _{B}$. Then
    \begin{equation}\label{eq:ben3}
    H_{B}=\bigg( \frac{1}{i}\nabla +A \bigg)^{2} \text{\;on\;} L^{2}(\mathbb{R}^{n}).
    \end{equation}
    It is shown in \cite{borisov2005spectrum} that the operator $H_{B}$ is non-negative  and symmetric in $L^{2}(\mathbb{R}^{n})$  and therefore it can be extended to self adjoint non-negative operator. Balinsky et al \cite{balinsky2001number} proved that the essential spectrum of the operator $H_{B}$ is $[0, \infty)$  (see \cite[Theorem 6.1]{cycon2009schrodinger}) and its discrete spectrum  is empty if the potential $A$ is of the Aharonov-Bohm type (see \cite[Section 5.5]{balinsky2015analysis} or \cite[Section 6]{cycon2009schrodinger},  for more details).
\end{subsection}
\begin{subsection}{Magnetic Schr\"{o}dinger operator}
    Magnetic Schr\"{o}dinger operators  appear in several areas of Physics, such as the Ginzburg-Landau theory of superconductors \cite[Section 10]{fournais2010spectral}, \cite{berdiyorov2012magnetoresistance,elmurodov2008phase}, the theory of Bose-Einstein condensates \cite{jun2004exact,kapitula2005bose}, and the study of edge states in quantum mechanics \cite{raghu2008analogs,frank2007asymptotic}.
    The magnetic Schr\"{o}dinger operator  is  defined by 
    \begin{equation}\label{eq:ben4}
    H_{A,V}=\bigg( \frac{1}{i}\nabla+A\bigg)^{2}-V \text{\;on\;} L^{2}(\mathbb{R}^{n}) ,
    \end{equation}
    where $A$ is a magnetic potential and $0\leq V\in{L_{loc}^{1}(\mathbb{R}^{n})}$ is the electric potential \cite{balinsky2011spectral,chung2014partial}.\\
    Under certain assumptions on $A$ and $V$, the operator $H_{B,V}$ is essentially self-adjoint and has a discrete spectrum \cite{ekholm2008stability}.
 \end{subsection}
 \end{section}
\newpage
\begin{section}{Statement of the problem}
    Given a function  $V:\mathbb{R}^{n}\to{[0,\infty)}$, consider the Schr\"{o}dinger operator
    \begin{equation}\label{eq:ben2}
     H_{V}:=-\Delta-V \text{\;on\;} L_{2}(\mathbb{R}^{n}), 
    \end{equation} 
    under certain assumptions, about $V$, such as if $V $ is uniformly locally square integrable for $n\leq 3$ or is uniformly locally in $L^{p}$ with $p>\frac{n}{2}$ for $n>3$, the operator \eqref{eq:ben2} is well-defined and self-adjoint on $L^{2}(\mathbb{R}^{n})$. Moreover, if $V(x)\to{0}$ as $|x|\to{\infty}$, its essential spectrum is the interval $[0,\infty)$ and 
    the discrete spectrum consists of eigenvalues of finite multiplicity \cite{last2005spectral}.
    In quantum mechanics, we call these negative eigenvalues the bound states and they describe energy levels of a quantum system, their respective eigenfunctions correspond to energy states \cite{simon1982schrodinger}. According to the CLR inequality, the number of negative eigenvalues of $H_{V}$ for  $n\geq 3$ is estimated by
    \begin{equation}\label{eq:ben6}
    Neg(V,\mathbb{R}^{n})\leq C\int_{\mathbb{R}^{n}}V(x)^{n/2}dx ,
    \end{equation}
    where $Neg(V,\mathbb{R}^{n})$ denotes the number of negative eigenvalues and $C$ a constant which depends on the dimension. It is known that the CLR inequality does not hold for $n=2$ (see \cite{shargorodsky2014negative,karuhanga2018eigenvalue,karuhanga2019discrete,karuhanga2020negative} and the reference there in). Some of the reasons are that the Sobolev space $H^{1}(\mathbb{R}^{2})$ is not continuously embedded in $L^{\infty}(\mathbb{R}^2)$ \cite{karuhanga2017estimates,karuhanga2018eigenvalue}, the Hardy inequality  
    \begin{equation}\label{eq:ben7}
    \int_{\mathbb{R}^{n}}\frac{|u(x)|^{2}}{|x|^{2}}dx\leq \bigg(\frac{n-2}{2}\bigg)^{2}\int_{\mathbb{R}^{n}}|\nabla u(x)|^{2}dx, u\in{C_{0}^{\infty}(\mathbb{R}^{n} \backslash \{0\})},
    \end{equation}
    does not hold for dimension two \cite{laptev1999hardy}, and the other is that the operator \eqref{eq:ben2} in dimension two has weakly coupled eigenvalues, that is, if $\int_{\mathbb{R}^{2}}V\geq 0$, then $Neg(-\Delta-\lambda V,0)\geq 1$ for any $\lambda>0$ \cite{kovarik2011eigenvalue}.
        
    It is  shown in \cite{balinsky2001number,kovarik2011eigenvalue,laptev1999hardy} that when an additional magnetic field is introduced, the above situation improves due to the diamagnetic effects.  Balinsky et al \cite{balinsky2001number} obtained an analogy of the CLR-inequality in two dimension in case the magnetic potential is the Aharonov-Bohm type, Kovarik  \cite{kovarik2011eigenvalue}, obtained  an estimate for the number of negative eigenvalues of the magnetic Schr\"{o}dinger operator for general magnetic fields. Laptev \cite{laptev2012spectral} obtained an estimate for the number of negative eigenvalues for a magnetic Schr\"{o}dinger operator in dimension two with Ahoronov-Bohm magnetic field with an optimal constant. Barseghyan et al in \cite{barseghyan2020eigenvalue}, considered the magnetic Schr\"{o}dinger operators on the two dimensional unit disk with a radially symmetric magnetic field which explodes to infinity at the boundary, they proved a bound for the eigenvalue moments and a bound  for the number of negative eigenvalues for such operators.

    In this study, the estimate due to Balinsky et al   \cite{balinsky2001number} is extended to the case of a straight strip where the boundary operators are of the Neumann type. In particular, we derive an upper estimate for the number of discrete negative eigenvalues of the operator
    \begin{equation}\label{eq:ben8}
    H_{B,V}:=\bigg(\frac{1}{i}\nabla+A\bigg)^{2}-V \text{\;on\;} L^{2}(S),
    \end{equation}
    with magnetic field $B=\nabla \times A$, where  $A$ is the magnetic field potential with non-integer flux and $S:=\{(x_{1},x_{2})\in{\mathbb{R}^{2}}:x_{1}\in{\mathbb{R}},0<x_{2}<d\}, d>0$ is an infinite straight two-dimensional strip of constant width $d$. The boundary conditions on the lower and upper axes of $S$ are maintained to be of Neumann type. In particular, the question of whether it is possible to establish the  analogue of the CLR inequality \eqref{eq:ben6} for the operator \eqref{eq:ben8}, is considered.              
\end{section}
\begin{section}{Aim of the study}
    The aim was  to estimate the number of negative eigenvalues of the magnetic Schr\"{o}dinger operator \eqref{eq:ben8}, in a strip  in terms of the strength or regularity properties of the electric potential. 
\begin{subsection}{Objectives}
	The study was intended to achieve the following specific objectives
    \begin{enumerate}[(i).]
    \item  To obtain an analogue of the CLR  inequality when  the magnetic field  is the Aharonov-Bohm magnetic field. 
    \item To establish the relationship between the estimate and the properties of the magnetic field.
   \end{enumerate}
\end{subsection}
\end{section} 
\newpage 
\begin{section}{Justification}
    Recent developments in technology such as coherent electron beams \cite{schoenlein2019recent}, highly sensitive electron detectors and photo-lithography, ability to observe microscopic objects previously thought unobservable such as the dynamics of quantized vortices in superconductors by Lorentz microscopy all have been attributed to the Aharonov-Bohm effect. These measurement and observation techniques are expected to play a more important role in future research and development in nano-science technology, medicine (magnetic resonance imaging (MRI) machine), transportation (Maglev train), quantum computers, squids, wireless communication, generators and accelerators. The study is therefore motivated by physical models such as those involved in the theory of the surface superconductivity \cite{fournais2010spectral}
    
    Recent study in \cite{ando2016modulation} show that electric field control of magnetism in metals have attracted wide spread attention,  Yamada et al. \cite{yamada2018microscopic} pointed out that microscopic mechanism is still controversial.
    
    The need to settle such controversies and have a better understanding of the above phenomena greatly is motivating to investigate the magnetic Schr\"{o}dinger equation which gives a better  description of such phenomena via the magnetic Schr\"{o}dinger operator. In particular estimates for number of negative eigenvalues for the  magnetic Schr\"{o}dinger operator give a detailed description of the nucleation of superconductivity for super conductors of type II and an accurate estimate for the critical field, magnetic anisotrophy energy of systems and stability of matter in magnetic fields\cite{balinsky2011spectral,lieb1997stability,taylor2006scattering}. 
\end{section}
\end{chapter}

\begin{chapter}{Preliminaries, auxilliary results and Literature Review}
\begin{section}{Introduction}
	In Section \ref{Review of basic facts in spectral theory}, we review  of basic definitions and various well known results from spectral theory that was used in the study.
	
	In Section \ref{quadratic form}, we give a review on the quadratic form of self-adjoint operators on a Hilbert space.
	
	In Section \ref{Auxiliary results}, we review the auxiliary results used in the study. These include the Hardy inequality, Sturm-Liouville operator and the Bargmann inequality.
	
    Finally, in Section \ref{sectionknownresults}, we present some known results starting with the work of Melgaard and Rozenblum \cite{melgaard1996spectral} where the well-known CLR inequality for the number of negative eigenvalues of the Schr\"{o}dinger operator $-\Delta -V$ on $L^{2}(\mathbb{R}^{n})$ is carried over to a class of second order differential operators.	Generalizing the  magnetic Sch\"{o}dinger operators case with variable coefficients which may be non-smooth, unbounded and with some degeneration  is allowed.
\end{section}
\begin{section}{Review of basic facts in spectral theory}\label{Review of basic facts in spectral theory}	
	\begin{subsection}{Sobolev spaces}
		Vector space of functions that have weak derivatives are called the Sobolev spaces \cite{adams1975sobolev,leoni2017first}. The motivation behind these spaces is that the solutions of partial differential equations if they exit belong to Sobolev spaces.\\
		Let $\Omega\in{\mathbb{R}^{n}}$ be an open set, $\partial \Omega$ denotes its boundary, $u\in{C^{\prime}(\Omega)}$ and $\varphi\in{C_{0}^{\infty}(\Omega)}$.
		
		Integration by parts gives 
		\begin{equation*}
		\int_{\Omega}u\frac{\partial \varphi}{\partial x_{j}}dx=-\int_{\Omega}\frac{\partial u}{\partial x_{j}}\varphi dx
		\end{equation*}
		
		Let $u\in{C^{k}(\Omega)}$, $k=1,2,\dots,$ and let $\alpha=(\alpha_{1},\alpha_{2},\dots,\alpha_{n})\in{\mathbb{N}^{n}}$ be a multi-index such that the order of multi-index $|\alpha|=\alpha_{1}+\dots+\alpha_{n}$ is at most $k$. We denote 
		\begin{equation*}
		D^{\alpha}u=\frac{\partial^{|\alpha|}u}{\partial x_{1}^{\alpha_{1}}\dots \partial x_{n}^{\alpha_{n}}}.
		\end{equation*}
		The order of a multi-index tells the total number of differentiations. Successive integration by parts yields 
		\begin{equation*}
		\int_{\Omega}uD^{\alpha}\varphi dx=(-1)^{|\alpha|}\int_{\Omega}D^{\alpha}u\varphi dx
		\end{equation*}
		\begin{defn}
			Let $u\in{L_{loc}^{1}(\Omega)}$. For a given multi-index $\alpha$, a function $v\in{L_{loc}^{1}(\Omega)}$ is called the $\alpha^{th}$ weak derivative of $u$ if 
			\begin{equation*}
			\int_{\Omega}\varphi vdx=(-1)^{|\alpha|}\int_{\Omega}uD^{\alpha}\varphi dx \text{    \;for all\;   } \varphi \in{C_{0}^{\infty}(\Omega)}.
			\end{equation*}
		\end{defn}
		A locally integrable function is said to be $k$-times weakly differentiable if for all $|\alpha|\leq k$, the weak derivative $D^{\alpha}u$ exists.
		
		\begin{defn}
			Let $m$ be a positive integer and let $1\leq p\leq \infty$. The Sobolev space $W^{m,p}(\Omega)$ is defined by 
			\begin{equation*}
			W^{m,p}(\Omega)=\{u\in{L^{p}(\Omega)}|D^{\alpha}u\in{L^{p}(\Omega)}, \text{\;for all \;} |\alpha|\leq m \}.
			\end{equation*}
			The space $W^{m,p}(\Omega)$ is a vector space contained in $L^{p}(\Omega)$ and is endowed with the norm $\|.\|_{m,p,\Omega}$ defined as follows
			\begin{equation*}
			\|u\|_{m,p,\Omega}=\bigg(\sum_{|\alpha|\leq m}\|D^{\alpha}u\|_{L^{p}(\Omega)}^{p}(\Omega)\bigg)^{\frac{1}{p}} \qquad \text{if}\qquad 1\leq p<\infty
			\end{equation*}
			and 
			\begin{equation*}
			\|u\|_{m,\infty,\Omega}=\max_{|\alpha|\leq m}\|D^{\alpha}u\|_{L^{\infty}(\Omega)}.
			\end{equation*}
			For $m=p=2$, $W^{2,2}(\Omega)$ is a Hilbert space and $H^{2}(\Omega)$ is its standard notation.
		\end{defn}
	\end{subsection}
	\begin{subsection}{The spectrum of a self-adjoint  operator}
		Let $\mathcal{H}$ be a complex Hilbert space, $\mathcal{D}\subset{\mathcal{H}}$ a linear subset and $T:\mathcal{D}\to \mathcal{H}$ a linear map.
		
		Let $T$ be densely defined operator, that is, to say $\overline{\mathcal{D}(T)}=\mathcal{H}$. Then the adjoint operator $T^{*}$ can be constructed as follows
		\begin{equation}\label{eq:litbasic2}
		D(T^{*}):=\{v\in{\mathcal{H}}\text{  } |\text{  } \exists h\in{\mathcal{H}}:\langle Tu,v\rangle_{\mathcal{H}}=\langle u,h\rangle_{\mathcal{H}}\forall u\in{\mathcal{D}(T)}\}.
		\end{equation}
		The vector $h$ is uniquely determined by $v$, and we set $h=T^{*}v$. Thus 
		\begin{equation*}
		\langle Tu,v\rangle=\langle u,T^{*}v\rangle \quad \forall u\in{\mathcal{D}(T)}, \forall v\in{D(T^{*})}.
		\end{equation*}
		\begin{defn}
			An operator $T$ which fulfills $T^{*}=T$ is said to be {\textit{self-adjoint}}.
		\end{defn}
		An operator $T$, such that $\overline{\mathcal{D}(T)}=\mathcal{H}$ and 
		\begin{equation*}
		\langle Tu,v\rangle=\langle u,Tv\rangle, \forall u,v\in{\mathcal{D}(T)}.
		\end{equation*}
		is called {\textit{symmetric}}.
		
		$T:\mathcal{H}\to \mathcal{H}$ is said to be bounded if there exists a constant $c>0$ such that $\|Tu\|\leq c\|u\|$ for all $u\in{\mathcal{D}(T)}$. 
		
		Denote by $I$ the identity operator in the complex Hilbert space $\mathcal{H}$ and $\mathcal{B}(\mathcal{H})$ the space of bounded linear operators in $\mathcal{H}$.
		\begin{defn}[Resolvent set:  $\rho(T)$]
			Let $T$ be a closed linear operator. The complex number $\lambda$ is called a {\textit{regular point}} if the operator $T-\lambda I$ is invertible with an inverse $(T-\lambda I)^{-1}$  defined every where on $\mathcal{H}$ that belongs to $\mathcal{B}(\mathcal{H})$. The set of all regular points constitute the resolvent set of $T$ denoted by $\rho(T)$. 
		\end{defn}
		\begin{defn}[Resolvent]
			The resolvent of an operator $T$ is the operator valued function $\lambda \mapsto (T-\lambda I)^{-1}$ mapping the set $\rho(T)$ into $\mathcal{B}(\mathcal{H})$.
		\end{defn}
		\begin{defn}[Spectrum: $\sigma(T)$]
			Let $T$ be a closed linear operator. The complement of its resolvent set $\rho(T)$ is the spectrum of $T$. The spectrum is denoted by $\sigma(T)$, and  $\sigma(T)=\mathbb{C}\backslash \rho (T)$.
		\end{defn}
		If $T$ is bounded, the resolvent set $\rho (T)$ is open while the spectrum $\sigma (T)$ is closed. If $T^{*}=T$, then the spectrum of $T$ is nonempty and lies on the real axis.
		\begin{defn}[Eigenvalue]
			If there exists a non-zero vector $x\in{\mathcal{D}(T)}$ and a complex number $\lambda$ for which $Tx=\lambda x$, then $\lambda$ is called an eigenvalue of $T$ and $x$ is an associated eigenvector.
		\end{defn}
		\begin{defn}[Point spectrum and continuous spectrum]
			The set of all eigenvalues of a self-adjoint operator $T$ is called the {\textit{point spectrum}} denoted by $\sigma_{p}(T)$ while the continuous spectrum denoted by $\sigma_{c}(T)$ is defined by 
			$$\sigma_{c}(T)=\{\lambda \in{\mathbb{R}}\text{  }:\text{  }T-\lambda I \text{ is injective and has dense range, but is not surjective} \}$$.
		\end{defn}
		The union of point spectrum $\sigma_{p}(T)$ and the continuous spectrum $\sigma_{c}(T)$ of a self adjoint operator $T$ represents its spectrum $\sigma(T)$ and it is said to have a purely continuous spectrum if $\sigma_{p}(T)=\emptyset$.
		\begin{defn}
			The union of the continuous spectrum $\sigma_{c}(T)$ and the set of eigenvalues of infinite multiplicity of a self adjoint operator $T$ is called the essential spectrum.
		\end{defn}
		We denote by $\sigma_{ess}(T)$, the essential spectrum of the self adjoint operator $T$. If $\sigma_{ess}(T)=\emptyset$, then $T$ has a discrete spectrum.
	\end{subsection}
\end{section}
\begin{section}{The quadratic form of a self-adjoint operator}\label{quadratic form}
	Let $\mathcal{H}$ denote a complete separable Hilbert space. Let $\langle f,g\rangle$ and $\|f\|$, $f, g\in{\mathcal{H}}$, denote respectively, the operations of inner product and norm in $\mathcal{H}$.
	\begin{defn}
		Let a linear subspace $\mathcal{D}(q)\subset \mathcal{H}$ be fixed. A mapping $q:\mathcal{D}(q)\times \mathcal{D}(q)\to \mathbb{C}$ which satisfies
		\begin{enumerate}[(i)]
			\item $q[\alpha u+\beta v, w]=\alpha q[u,w]+\beta q[v,w]$,
			\item $q[u,\alpha v +\beta w]= \closure{\alpha}q[u,v]+\closure{\beta}q[u,w]$
		\end{enumerate}
		for all $u,v,w\in{\mathcal{D}(q)}$ is called a sesquilinear (quadratic) form. If $\mathcal{D}(q)$ is dense in $\mathcal{H}$, then $q$ is densely defined.
	\end{defn}
	The form $q$ is said to be symmetric if $q[u,v]=q\closure{[v,u]}$ and lower semi-bounded if $q[u]\geq c\|u\|^{2}$, where $c$ is a constant  and $q[u]:=q[u,u]$. A symmetric sesquilinear form is called a Hermitian form.
	\begin{defn}
		Let $T:\mathcal{H}\to \mathcal{H}$ be a linear operator. A core of $T$ is a subset $B$ of $\mathcal{D}(T)$ such that the closure of the restriction of $T$ to $B$ is $\closure{T}$.
	\end{defn}
	\begin{theorem}[{\cite[Theorem 5.37]{weidmann1980linear}}]\label{Weidmann1980them5.37}
		Let $q$ be a densely defined closed, symmetric and linear semi-bounded form in $\mathcal{H}$. Then there exists a unique self-adjoint operator $T$ on $\mathcal{H}$ such that $\mathcal{D}\subset \mathcal{D}(q)$ and $q[u,v]=\langle Tu,v\rangle $ for every $u\in{\mathcal{D}(T)}$ and $v\in{\mathcal{D}(q)}$. If $u\in{\mathcal{D}(q)}$ and $w\in{\mathcal{H}}$ such that $q[u,v]=\langle w,v\rangle$ holds for every $v$ in the core of $q$, then $u\in{\mathcal{D}(T)}$ and $Tu=w$.
	\end{theorem}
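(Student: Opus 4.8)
The plan is to realise the form domain as an auxiliary Hilbert space and then to invoke the Riesz representation theorem there. Since $q$ is only assumed lower semi-bounded, say $q[u]\geq c\|u\|^{2}$, I would first reduce to the coercive case by passing to the shifted form $\tilde q[u,v]:=q[u,v]+(1-c)\langle u,v\rangle$, which is again densely defined, symmetric and closed (the two form norms have equivalent graph norms) and which satisfies $\tilde q[u]\geq\|u\|^{2}$. Any self-adjoint operator $\tilde T$ representing $\tilde q$ then yields $T:=\tilde T-(1-c)I$ representing $q$ on the same domain, so it suffices to treat the case $q[u]\geq\|u\|^{2}$. I would then equip $\mathcal{D}(q)$ with the inner product $\langle u,v\rangle_{q}:=q[u,v]$. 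The whole point of the closedness hypothesis is precisely that $(\mathcal{D}(q),\langle\cdot,\cdot\rangle_{q})$ is complete, i.e.\ a Hilbert space $\mathcal{H}_{q}$; moreover $\|u\|\leq\|u\|_{q}$, so the inclusion $\mathcal{H}_{q}\hookrightarrow\mathcal{H}$ is continuous.

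Next I would construct the solution operator. For fixed $w\in\mathcal{H}$ the map $v\mapsto\langle v,w\rangle$ is a bounded linear functional on $\mathcal{H}_{q}$, since $|\langle v,w\rangle|\leq\|v\|\,\|w\|\leq\|w\|\,\|v\|_{q}$. By the Riesz representation theorem in $\mathcal{H}_{q}$ there is a unique $Gw\in\mathcal{D}(q)$ with $\langle v,w\rangle=\langle v,Gw\rangle_{q}=q[v,Gw]$; taking complex conjugates and using the symmetry of $q$ gives
\[
q[Gw,v]=\langle w,v\rangle \qquad \text{for all } v\in\mathcal{D}(q).
\]
This defines a linear map $G:\mathcal{H}\to\mathcal{H}$; testing with $v=Gw$ gives $\|Gw\|_{q}^{2}=\langle w,Gw\rangle\leq\|w\|\,\|Gw\|_{q}$, whence $\|Gw\|\leq\|Gw\|_{q}\leq\|w\|$ and $G$ is bounded. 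Using the symmetry of $q$ one checks $\langle Gw_{1},w_{2}\rangle=q[Gw_{1},Gw_{2}]=\overline{q[Gw_{2},Gw_{1}]}=\langle w_{1},Gw_{2}\rangle$, so $G$ is self-adjoint, and $\langle Gw,w\rangle=\|Gw\|_{q}^{2}\geq 0$ shows it is non-negative; $G$ is injective because $Gw=0$ forces $\langle w,v\rangle=0$ on the dense set $\mathcal{D}(q)$, hence $w=0$. Since $G$ is a bounded, injective, non-negative self-adjoint operator, its inverse $T:=G^{-1}$ with domain $\mathcal{D}(T):=\operatorname{Ran}(G)$ is self-adjoint.

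It then remains to verify the representation and the auxiliary characterisation. For $u\in\mathcal{D}(T)$ I write $u=Gw$ with $w=Tu$, and the defining identity for $G$ reads $q[u,v]=\langle Tu,v\rangle$ for all $v\in\mathcal{D}(q)$, while $\mathcal{D}(T)\subset\mathcal{D}(q)$ by construction. For uniqueness, a second self-adjoint representing operator would have the same resolvent-like solution map $G$, and comparing $\langle(T-T')u,v\rangle=0$ on the dense set $\mathcal{D}(q)$ forces $T=T'$. For the final assertion, if $u\in\mathcal{D}(q)$ and $w\in\mathcal{H}$ satisfy $q[u,v]=\langle w,v\rangle$ on a core of $q$, then by continuity in $\|\cdot\|_{q}$ the identity extends to all $v\in\mathcal{D}(q)$; subtracting the defining relation for $Gw$ yields $q[u-Gw,v]=0$ for all such $v$, and the choice $v=u-Gw$ gives $\|u-Gw\|_{q}=0$, so $u=Gw\in\mathcal{D}(T)$ and $Tu=w$.

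The main obstacle is the passage from the purely form-theoretic hypotheses to a genuine operator: the crux is recognising that closedness is exactly completeness of $\mathcal{D}(q)$ in the form norm, which is what licenses the use of Riesz in $\mathcal{H}_{q}$, and then verifying that the bounded non-negative operator $G$ is injective with self-adjoint inverse. The delicate bookkeeping is keeping the sesquilinearity and conjugation conventions consistent throughout, so that $G$ comes out genuinely self-adjoint rather than merely symmetric and the displayed identity $q[Gw,v]=\langle w,v\rangle$ carries the correct complex conjugate.
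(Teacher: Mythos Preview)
The paper does not supply its own proof of this theorem: it is quoted verbatim from Weidmann \cite[Theorem~5.37]{weidmann1980linear} as a standard background result, with no argument given. There is therefore nothing in the paper to compare your proposal against.

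That said, your argument is correct and is essentially the classical proof one finds in Weidmann or Kato: shift the form to make it coercive, complete $\mathcal{D}(q)$ in the form norm (this is exactly where closedness is used), apply Riesz to obtain a bounded non-negative injective self-adjoint $G$, and set $T=G^{-1}$. Two minor points worth tightening. First, in the uniqueness step, writing $\langle (T-T')u,v\rangle=0$ tacitly assumes $\mathcal{D}(T)=\mathcal{D}(T')$; a cleaner route is to observe that the second assertion you prove (the ``if $q[u,v]=\langle w,v\rangle$ for all $v$ then $u\in\mathcal{D}(T)$ and $Tu=w$'' characterisation) pins down $\mathcal{D}(T)$ and $T$ intrinsically in terms of $q$, so any other representing self-adjoint operator must coincide with $T$. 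Second, when you assert that $T=G^{-1}$ is self-adjoint you are implicitly using that $\operatorname{Ran}(G)$ is dense; this follows since $G=G^{*}$ and $\ker G=\{0\}$ give $\overline{\operatorname{Ran}(G)}=(\ker G^{*})^{\perp}=\mathcal{H}$, but it is worth saying so explicitly.
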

	Let $\mathcal{H}$ be a Hilbert space and let $q$ be a Hermitian form with a domain $\mathcal{D}(q)\subset \mathcal{H}$. Set 
	\begin{equation}\label{eq:sae3a}
	Neg(q):=\sup \{\dim \ell \text{ }|\text{ }q{[u]}<0, \forall u\in{ \ell \backslash \{0\}}\},
	\end{equation}
	where $\ell$ denotes a linear subspace of $\mathcal{D}(q)$.
	The number $Neg(q)$ is called the Morse index of $q$ in $\mathcal{D}(q)$. If $q$ is the quadratic form of a self adjacent operator $T$ with no essential spectrum in $(-\infty,0)$, then $Neg(q)$ is called the number of negative eigenvalues of $T$ repeated according to their multiplicity (\cite[Section 2.1]{cox2015morse}).
\end{section}
\newpage
\begin{section}{Auxiliary results}\label{Auxiliary results}
	\begin{subsection}{Hardy inequality}
		\begin{theorem}[{\cite[Theorem 327]{hardy1952inequalities}}]
			If $p>1$, $f(x)\geq 0$, and $F(x)=\int_{0}^{x}f(t)dt$, then
			\begin{equation}\label{eq:onedimhardy1}
			\int_{0}^{\infty}\bigg(\frac{F}{x}\bigg)^{p}dx\leq \bigg(\frac{p}{p-1}\bigg)^{p}\int_{0}^{\infty}f^{p}dx.
			\end{equation}
			If the right-hand side is finite, equality holds if and only if $f=0$ almost everywhere.
		\end{theorem}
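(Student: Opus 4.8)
The plan is to follow the classical route: reduce the inequality to a single application of Hölder's inequality after one integration by parts, and then read off the equality condition from the equality case of Hölder. Write $p' = p/(p-1)$ for the conjugate exponent and set $J = \int_0^\infty (F/x)^p\,dx$ and $K = \int_0^\infty f^p\,dx$. I may assume $K < \infty$, since otherwise there is nothing to prove, and (after a truncation step, see below) that $0 < J < \infty$ as well.

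First I would integrate $\int_0^\infty F^p x^{-p}\,dx$ by parts, differentiating $F^p$ (using $F' = f$) and integrating $x^{-p}$. Since $p>1$, the antiderivative of $x^{-p}$ is $-x^{1-p}/(p-1)$, and formally
$$\int_0^\infty \frac{F^p}{x^p}\,dx = \frac{p}{p-1}\int_0^\infty \Big(\frac{F}{x}\Big)^{p-1} f\,dx,$$
provided the boundary terms $F^p x^{1-p}$ vanish at $0$ and at $\infty$. Next I would apply Hölder's inequality to the right-hand integral with exponents $p'$ and $p$. Because $(p-1)p' = p$, the first factor is exactly $J^{1/p'}$, giving
$$J \leq \frac{p}{p-1}\, J^{1/p'} K^{1/p}.$$
Dividing through by $J^{1/p'}$ and using $1 - 1/p' = 1/p$ yields $J^{1/p} \leq \frac{p}{p-1} K^{1/p}$, which is the claimed bound after raising to the $p$-th power.

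For the equality statement, equality in Hölder forces $(F/x)^p$ to be proportional to $f^p$ almost everywhere, i.e. $F(x) = c\,x\,F'(x)$ for some constant $c$. Solving this first-order ODE gives $F(x) = A x^{1/c}$, so $f$ is a pure power of $x$; such an $f$ satisfies $\int_0^\infty f^p\,dx = \infty$ unless $A = 0$. Hence finiteness of $K$ forces $f = 0$ almost everywhere, as required.

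The main obstacle I expect is the rigorous justification of the integration by parts, specifically showing that the boundary contributions vanish and that $J$ is finite to begin with. The clean way around this is to prove the inequality first on a truncated interval $[\varepsilon, X]$, where all quantities are genuinely finite and the boundary terms can be controlled, and then let $\varepsilon \to 0$ and $X \to \infty$ via monotone convergence. Alternatively, one can bypass integration by parts altogether: the substitution $t = xs$ gives $F(x)/x = \int_0^1 f(xs)\,ds$, and Minkowski's integral inequality together with the scaling identity $\|f(\cdot\, s)\|_p = s^{-1/p}\|f\|_p$ yields
$$\Big\|\frac{F}{x}\Big\|_p \leq \|f\|_p \int_0^1 s^{-1/p}\,ds = \frac{p}{p-1}\,\|f\|_p$$
directly, with the equality analysis reducing once more to the proportionality of the slices $s \mapsto f(xs)$ and hence to the same power-law conclusion.
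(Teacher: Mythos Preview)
The paper does not supply its own proof of this theorem: it is quoted verbatim from Hardy--Littlewood--P\'olya \cite{hardy1952inequalities} as background and the text moves on immediately to the magnetic Hardy inequality. There is therefore nothing in the paper to compare your argument against.

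That said, your proposal is correct and is precisely the classical proof given in the cited reference: integrate $F^p x^{-p}$ by parts to get $J = \frac{p}{p-1}\int_0^\infty (F/x)^{p-1} f\,dx$, apply H\"older with exponents $p'$ and $p$, and divide through by $J^{1/p'}$. Your handling of the technical points (truncating to $[\varepsilon, X]$ to make the boundary terms and $J$ finite, then passing to the limit) is the standard way to make the formal computation rigorous, and your equality analysis via the H\"older equality case is also the classical one. The alternative route you sketch through Minkowski's integral inequality is likewise well known and valid; it has the advantage of avoiding the boundary-term discussion entirely, at the cost of making the equality case slightly less transparent.
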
		
		The study of the Hardy-type inequalities related to the Magnetic Schr\"{o}dinger operator \eqref{eq:ben3}, in the recent years has risen a lot  of interest in the spectral analysis of magnetic Schr\"{o}dinger operators (see \cite[Theorems 1 and 2]{laptev1999hardy}). 
		
		The classical Hardy inequality 
		\begin{equation}\label{eq:mhardy2}
		\int_{\mathbb{R}^{n}}\frac{|u(x)|^{2}}{|x|^{2}}dx\leq \bigg(\frac{n-2}{2}\bigg)\int_{\mathbb{R}^{n}}|\nabla u(x)|^{2}dx, \text{ } u\in{C_{0}^{\infty}(\mathbb{R}^{n}\backslash\{0\})},
		\end{equation}
		does not hold for the dimension $n=2$.  Laptev and Weidl in \cite{laptev1999hardy} discovered that  by replacing the gradient $\frac{1}{i}\nabla$ with ''magnetic'' gradient $\frac{1}{i}\nabla+A$, the situation  improves, where $A$ is given by 
		\begin{equation}\label{eq:mhardy3}
		A(r,\theta)=\frac{\alpha (\theta)}{r}(\sin{\theta},-\cos{\theta}),
		\end{equation}
		where $\alpha$ is the magnetic flux.
		
		A magnetic Hardy-type inequality (see \cite[Theorem 1]{laptev1999hardy}) 
		\begin{equation}\label{eq:mhardy4}
		\int_{\mathbb{R}^{2}}\frac{|u(x)|^{2}}{1+|x|^{2}}dx\leq C \int_{\mathbb{R}^{2}}|(\frac{1}{i}\nabla+A+)u(x)|^{2}dx,
		\end{equation}
		holds true with the constant $C$ strongly depending on the magnetic field and for the Aharonov-Bohm type magnetic potential, the inequality holds with a sharp constant.
		\begin{equation}\label{eq:mhardy5}
		\int_{\mathbb{R}^{2}}\frac{|u(x)|^{2}}{|x|^{2}}dx\leq C\int_{\mathbb{R}^{2}}\bigg|\bigg(\frac{1}{i}\nabla +A\bigg)u(x)\bigg|^{2}dx, u\in{C_{0}^{\infty}(\mathbb{R}^{2}\backslash\{0\})}
		\end{equation}
		holds  true with the sharp constant $C=\bigg(\min_{k\in{\mathbb{Z}}}|k-\alpha|\bigg)^{-2}$. Here $\alpha$ is the circulation of $A$ round the origin and 
		\begin{equation*}
		\alpha=\frac{1}{2\pi}\int_{\mathbb{S}^{1}}Adx,
		\end{equation*}
		where $\mathbb{S}^{1}$ is a unit sphere.		
	\end{subsection}
	\begin{subsection}{Sturm-Liouville operator and Bargmann inequality}
		Let $\{p,q, w\}$ be the Sturm-Liouville coefficients defined on the interval $(a,b)$. Then the general Sturm-Liouville differential equation (see \cite[Section 7]{everitt2005catalogue}) is given by
		\begin{equation}\label{eq:sturm1}
		-\bigg(p(x)y^{\prime}(x)\bigg)^{\prime}+q(x)y(x)=\lambda w(x)y(x) \text{  for all $x\in{(a,b)}$},
		\end{equation}
		where $w(x)>0$ is a weight function $w:(a,b)\to \mathbb{R}$.
		
		The classical Sturm Liouville differential equation (see \cite[Section 7]{everitt2005catalogue}) is  given by 
		\begin{equation}\label{eq:sturm2}
		-y^{\prime \prime}(x)=\lambda y(x) \text{  for all $x\in{(-\infty,+\infty)}$}.
		\end{equation}
		A detail history of Sturm-Liouville theory can be found in \cite{lutzen1984sturm}.
		A sturm Liouville differential operator is  thus given by
		\begin{equation}\label{eq:sturm3}
		\tau :=-\bigg(p(x)y^{\prime}(x)\bigg)^{\prime}+q(x)y(x).
		\end{equation}
		Generally,  well-posed Sturm-Liouville boundary value problem generate self-adjoint differential operators in $L^{2}(a, b)$.
		
		\begin{lemma}[{\cite[{\bf Lemma 1.2}]{schmidt2002short}}]\label{lem3}
			Let $\tau$ be a self-adjoint Sturm Liouville  operator  on the interval $(a,b)$ and $\lambda$ be a real number. If $\tau$ has $n$ eigenvalues below $\lambda$, counting multiplicities, then there is a non-zero solution of the equation 
			\begin{equation}\label{eq:sben3b}
			\tau u=\lambda u
			\end{equation}
			with $n+1$ zeros in $(a,b)$.
		\end{lemma}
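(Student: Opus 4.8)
The plan is to prove the statement through the Sturm oscillation theory associated with \eqref{eq:sturm1}, converting the second-order equation $\tau u=\lambda u$ into a first-order equation for a phase function whose growth counts the zeros of $u$. Writing a real solution in Pr\"ufer polar form $p(x)u'(x)=\rho(x)\cos\theta(x)$, $u(x)=\rho(x)\sin\theta(x)$ with $\rho>0$, the amplitude $\rho$ never vanishes, so the zeros of $u$ in $(a,b)$ are precisely the points where $\theta(x)\in\pi\mathbb{Z}$. A direct computation from \eqref{eq:sturm3} gives the phase equation $\theta'=p^{-1}\cos^2\theta+(\lambda w-q)\sin^2\theta$, so at any such point $\theta'(x)=p(x)^{-1}>0$: the angle is strictly increasing across every multiple of $\pi$. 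Hence the number of zeros of $u$ in $(a,b)$ equals the number of integer multiples of $\pi$ lying strictly between $\theta(a)$ and $\theta(b)$, and the whole argument reduces to controlling the total increase of the Pr\"ufer angle at energy $\lambda$.

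First I would fix the solution $u_\lambda$ satisfying the separated left boundary condition of $\tau$, normalising $\theta(a)=\alpha\in[0,\pi)$, and record two monotonicity facts. Differentiating the phase equation in the spectral parameter (equivalently, applying the Sturm comparison theorem, using $w>0$) shows that the terminal angle $\theta(b,\mu)$ is strictly increasing in $\mu$. Moreover $\mu$ is an eigenvalue of $\tau$ exactly when $u_\mu$ also meets the right boundary condition, i.e. when $\theta(b,\mu)\equiv\beta\pmod\pi$ for the fixed angle $\beta\in(0,\pi]$ determined by that condition. Combining these, the hypothesis that $\tau$ has $n$ eigenvalues $\mu_0\le\cdots\le\mu_{n-1}$ below $\lambda$ forces $\theta(b,\mu_{n-1})=\beta+(n-1)\pi$, whence by strict monotonicity and $\mu_{n-1}<\lambda$,
\[
\theta(b,\lambda)>\beta+(n-1)\pi .
\]
By the zero-counting principle this already exhibits a solution of $\tau u=\lambda u$ with at least $n$ zeros in $(a,b)$.

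To upgrade the count to $n+1$ I would exploit the freedom in the choice of solution together with the strict inequality $\lambda>\mu_{n-1}$. Since the two fundamental solutions of $\tau u=\lambda u$ are linearly independent, the Sturm separation theorem forces their zeros to interlace strictly, so their zero-counts in $(a,b)$ differ by at most one, and I may select the fundamental solution whose Pr\"ufer angle sweeps through one further multiple of $\pi$ inside the interval. Equivalently, I would track the zero that enters through an endpoint as $\mu$ is raised from $\mu_{n-1}$ to $\lambda$ and choose the initial angle so that this additional zero lies strictly inside $(a,b)$. Writing the resulting solution back as $\rho\sin\theta$ then yields a nonzero solution of $\tau u=\lambda u$ with $n+1$ zeros in $(a,b)$.

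The main obstacle is precisely this last piece of bookkeeping. The oscillation count attached to the distinguished left-boundary solution is $n$, not $n+1$, so the gain of one zero is genuine and must be extracted carefully: one must control whether the borderline zero sits on, just inside, or just outside an endpoint, and use the strictness $\lambda>\mu_{n-1}$ (through the \emph{strict} form of Sturm comparison) to push it into the open interval. Treating these endpoint cases uniformly across all admissible separated self-adjoint boundary conditions is where the real difficulty lies; by contrast the $\pi$-periodicity of the phase flow and the self-adjointness (guaranteed in the form setting by Theorem \ref{Weidmann1980them5.37}) are routine.
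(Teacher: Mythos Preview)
The paper does not prove this lemma at all: it is quoted from \cite[Lemma 1.2]{schmidt2002short} and immediately used as a black box in the proof of Lemma~\ref{lembargman1}. There is therefore no in-paper argument to compare your proposal against.

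That said, your Pr\"ufer-angle strategy is exactly the method of the cited reference, and your first two steps (monotonicity of $\theta(b,\mu)$ in $\mu$, identification of eigenvalues with $\theta(b,\mu)\equiv\beta\pmod\pi$) are correct and standard. The genuine gap is the passage from $n$ to $n+1$ zeros. Your claim that interlacing lets you ``select the fundamental solution whose Pr\"ufer angle sweeps through one further multiple of $\pi$'' is not valid as stated: take $-u''=2u$ on $(0,\pi)$ with Dirichlet conditions, where $n=1$. Both $\sin(\sqrt{2}\,x)$ and $\cos(\sqrt{2}\,x)$ have exactly one zero in $(0,\pi)$, so no choice between two fixed fundamental solutions yields the required two zeros; one needs instead something like $\sin\bigl(\sqrt{2}\,(x-\epsilon)\bigr)$ with small $\epsilon>0$. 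Your second formulation---choose the initial angle so that a zero sits just inside $a$, then use the strict inequality $\theta(b,\lambda)>\beta+(n-1)\pi$ to collect $n$ more zeros before $b$---is the correct mechanism, and you rightly flag it as the crux. But the proof as written stops at the intuition and does not carry out the endpoint bookkeeping that actually secures the extra zero uniformly across all separated self-adjoint boundary conditions.
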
  
		Let $M\in{L_{loc}^{1}(\mathbb{R})}$ and consider the following self adjoint Sturm-Liouville operator
		\begin{equation}\label{eq:proofneg1051}
		\tau=-\frac{d^{2}}{dx^{2}}+\frac{c}{x^{2}}-M(x), \text{ on $L^{2}(\mathbb{R})$},
		\end{equation}
		where $c$ is a constant.	
		\begin{lemma}\label{lembargman1}
			If the operator \eqref{eq:proofneg1051} has $n$ negative eigenvalues (counting multiplicities), then
			\begin{equation}\label{eq:bargmann3}
			\int_{\mathbb{R}}M(x)dx>n\sqrt{4c+1}.
			\end{equation}
		\end{lemma}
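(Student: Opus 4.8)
The plan is to turn the eigenvalue count into a nodal count by invoking Lemma~\ref{lem3}, and then to measure, gap by gap, how much ‘‘mass’’ of $M$ is needed to create one additional zero. First I would apply Lemma~\ref{lem3} to $\tau$ with $\lambda=0$: since $\tau$ has $n$ eigenvalues below $0$, there is a non-trivial solution $u$ of the zero-energy equation $-u''+\frac{c}{x^{2}}u-Mu=0$ with $n+1$ zeros $x_{0}<x_{1}<\dots<x_{n}$. On each open interval $(x_{k-1},x_{k})$ the function $u$ has constant sign, which we may take positive, so the problem reduces to a statement about a single nodal gap.

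The heart of the matter is a per-gap lower bound: I would show that each gap forces $\int_{x_{k-1}}^{x_{k}}M>\sqrt{4c+1}$, so that summing the $n$ gaps yields $\int_{\mathbb{R}}M>n\sqrt{4c+1}$. The constant $\sqrt{4c+1}$ should emerge from the two explicit solutions $\phi_{\pm}(x)=x^{s_{\pm}}$ of the unperturbed Euler equation $-\phi''+\frac{c}{x^{2}}\phi=0$, where $s_{\pm}=\frac{1\pm\sqrt{4c+1}}{2}$, so that $s_{+}-s_{-}=\sqrt{4c+1}$. The cleanest device to expose this is the logarithmic-derivative (Riccati) substitution $w=u'/u$, which converts the equation into $w'+w^{2}=\frac{c}{x^{2}}-M$, that is $M=\frac{c}{x^{2}}-w'-w^{2}$; integrating across a gap and using that $w\to+\infty$ at the left node and $w\to-\infty$ at the right node, the boundary blow-up and the $\int w^{2}$ term must be combined so that the divergences cancel and the surviving finite contribution is exactly the spectral gap $s_{+}-s_{-}$ of the Euler operator. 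Equivalently, after the change of variables $x=e^{t}$, $u=x^{1/2}v$, the equation becomes the constant-frequency equation $-\ddot v+\omega^{2}v=\tilde M v$ with $\omega=\tfrac12\sqrt{4c+1}$, for which a Wronskian comparison with $e^{\pm\omega t}$ gives the gap bound $>2\omega=\sqrt{4c+1}$, the extremal profile being a point interaction spread over an ever-wider gap (which also explains why the inequality is strict and the constant is not attained).

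Finally I would sum the $n$ gap estimates and dispose of the singular point $x=0$: for $c>0$ the centrifugal term separates $\mathbb{R}$ into the half-lines $(-\infty,0)$ and $(0,\infty)$, on each of which the argument runs independently, and the nodes supplied by Lemma~\ref{lem3} are distributed between them. I expect the main obstacle to be precisely the per-gap estimate, namely controlling the two node blow-ups of $w$ so that the singular parts cancel and leave exactly $\sqrt{4c+1}$, together with proving strictness. A secondary technical point that needs care is matching the integral $\int_{\mathbb{R}}M$ with the quantity produced by the Riccati/exponential reduction, since the change of variables rescales the measure; verifying that the bookkeeping returns the unweighted integral in the stated form is the step I would check most carefully before declaring the constant sharp.
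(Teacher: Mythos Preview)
Your plan matches the paper's proof in outline: invoke Lemma~\ref{lem3} to obtain a zero-energy solution with $n+1$ nodes, then extract a per-gap lower bound via a Riccati substitution and sum. The paper's execution is precisely your second alternative in disguise: rather than working with $w=u'/u$ and trying to make divergences cancel (which does not work by direct integration across the gap), the paper sets $z(x)=x\,u'(x)/u(x)$, for which the Riccati equation reads
\[
z'=-\frac{1}{x}\Big(z-\tfrac12\Big)^{2}+\frac{c+\tfrac14}{x}-xM(x),
\]
so the Euler exponents $s_{\pm}=\tfrac12\pm\tfrac12\sqrt{4c+1}$ appear as constant critical levels $z_{\pm}$. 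One then restricts to the subinterval $[x_{+},x_{-}]$ on which $z$ passes from $z_{+}$ down to $z_{-}$; there the first two terms are nonnegative, so $z'\ge -xM$, and integrating gives the gap bound $z_{+}-z_{-}=\sqrt{4c+1}$. This is exactly your $x=e^{t}$, $u=x^{1/2}v$ reduction rewritten as a Pr\"ufer-type variable.

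Your caution about the measure is well placed and is the one point on which you should not simply trust the stated form: the Riccati step naturally produces $\int xM(x)\,dx$ (equivalently $\int \tilde M(t)\,dt$ after your exponential change of variables), not $\int M(x)\,dx$, so when you carry this through to the application you should track that weight explicitly.
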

		\begin{proof}
			By Lemma \ref{lem3}, there is a solution of 
			\begin{equation}\label{eqbargmman1}
			-y^{\prime \prime}+\big(\frac{c}{x^{2}}-M(x)\big)y=0
			\end{equation}
			that has $n+1$ positive zeros, in a bounded interval on $\mathbb{R}$.
			
			Let $x_{1}$ and $x_{2}$ be any two consecutive zeros , then
			\begin{equation*}
			z(x):=x\frac{y^{\prime}(x)}{y(x)}, \text{   $x\in{(x_{1}, x_{2})}$}
			\end{equation*}
			is locally absolutely continuous.  
			\begin{equation*}
			\begin{aligned}
			z^{\prime}(x)&=1.\frac{y^{\prime}(x)}{y(x)}+x.\bigg(\frac{y(x).y^{\prime \prime}(x)-y^{\prime}(x)y^{\prime}(x)}{(y(x))^{2}}\bigg)\\
			&=\frac{y^{\prime}(x)}{y(x)}+x.\bigg(\frac{y^{\prime\prime}(x)}{y(x)}-\bigg(\frac{y^{\prime}(x)}{y(x)}\bigg)^{2}\bigg).
			\end{aligned}
			\end{equation*}
			Since $\frac{y^{\prime\prime}(x)}{y(x)}=\frac{c}{x^{2}}-M(x)$ and $\frac{y^{\prime}(x)}{y(x)}=\frac{1}{x}z(x)$, it follows that
			\begin{equation*}
			\begin{aligned}
			z^{\prime}(x)&=\frac{1}{x}z(x_{1})+x.\bigg(\big(\frac{c}{x^{2}}-M(x)\big)-\bigg(\frac{1}{x}z(x)\bigg)^{2}\bigg)\\
			&=\frac{-1}{x}\bigg((z(x))^{2}-z(x)\bigg)+\frac{c}{x}-xM(x)\\
			&=\frac{-1}{x}\bigg(\bigg(z(x)-\frac{1}{2}\bigg)^{2}-\frac{1}{4}\bigg)+\frac{c}{x}-xM(x).
			\end{aligned}
			\end{equation*}
			Thus, $z(x)$ satisfies
			\begin{equation*}
			z^{\prime}(x)=\frac{-1}{x}(z(x)-\frac{1}{2})^{2}+\frac{c+\frac{1}{4}}{x}-xM(x)
			\end{equation*}
			and
			\begin{equation*}
			\lim_{x\to x_{1}}z(x)=\infty, \text{ $\lim_{x\to x_{2}}z(x)=-\infty.$}
			\end{equation*}
			By setting
			\begin{equation*}
			\frac{-1}{x}(z(x)-\frac{1}{2})^{2}+\frac{c+\frac{1}{4}}{x}=0,
			\end{equation*}
			we obtain
			\begin{equation*}
			z_{\pm}(x)=\frac{1}{2}\pm \sqrt{c+\frac{1}{4}}.
			\end{equation*}
			We shall therefore have  the two points 
			\begin{equation*}
			\begin{aligned}
			x_{{+}}:=\max \{x\in{(x_{1}, x_{2})}|z(x)\geq z_{+}\}\in{(x_{1}, x_{2})},\\
			x_{{-}}:=\min \{x\in{(x_{+}, x_{2})}|z(x)\leq z_{-}\}\in{(x_{+}, x_{2})}
			\end{aligned}
			\end{equation*}
			such that $z(x_{\pm})=z_{\pm}(x)$, for  $z(x_{1})\in[z_{-},z_{+}]$ and $x\in{[x_{+}, x_{-}]}$.
			
			This implies that  $z^{\prime}(x)\geq -M(x_{1})$ for $x\in{[x_{+}, x_{-}]}$.
			
			Thus
			\begin{equation*}
			\begin{aligned}
			\int_{x_{+}}^{x_{-}}-M(x_{1})dx&\leq \int_{x_{+}}^{x_{-}}z^{\prime}(x)dx=z_{-}-z_{+}\\
			&=\bigg(\frac{1}{2}-\sqrt{c+\frac{1}{4}}\bigg)-\bigg(\frac{1}{2}+\sqrt{c+\frac{1}{4}}\bigg)\\
			&=-2\sqrt{c+\frac{1}{4}}=-\sqrt{4c+1}.
			\end{aligned}
			\end{equation*}
			So
			\begin{equation*}
			-\int_{x_{+}}^{x_{-}}M(x)dx\leq -\sqrt{4c+1}.
			\end{equation*}
			Hence 
			\begin{equation}\label{eq:bargmann2}
			\int_{-\infty}^{\infty}M(x)dx>n\sqrt{4c+1}.
			\end{equation}
		\end{proof}
	\end{subsection}
    \begin{subsection}{Diamagnetism}
    	For Schr\"{o}dinger operators, introduction of a magnetic field raises the energy.
    	\begin{theorem}[Diamagnetic inequality {\cite[Theorem 2.1.1]{fournais2010spectral}}]\label{diamagnetic inequality}
    		Let $A:\mathbb{R}^{n}\to \mathbb{R}^{n}$ be in $L_{loc}^{2}(\mathbb{R}^{2})$ and suppose that $f\in{L_{loc}^{2}(\mathbb{R}^{2})}$ is such that $(-i\nabla+A)f\in{L_{loc}^{2}(\mathbb{R}^{2})}$. Then $|f|\in{H_{loc}^{2}(\mathbb{R}^{n})}$ and 
    		\begin{equation}\label{eq:diamagnetic1}
                 \big|\nabla|f|\big|\leq \big|(-i\nabla+A)f\big|
    		\end{equation}
    		almost everywhere.
    	\end{theorem}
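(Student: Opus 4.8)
The plan is to reduce \eqref{eq:diamagnetic1} to a pointwise identity valid away from the zeros of $f$, and then to pass to the limit through a regularisation that removes the singularity at the zero set. Write $D_{A}f:=(-i\nabla+A)f$ and note the factorisation $D_{A}f=-i(\nabla+iA)f$, so that $|D_{A}f|=|(\nabla+iA)f|$ pointwise. Where $f\neq 0$, differentiating $|f|^{2}=f\bar f$ gives $\nabla|f|=\operatorname{Re}\!\big(\tfrac{\bar f}{|f|}\nabla f\big)$, and since $A$ and $|f|$ are real the magnetic term is purely imaginary after multiplication by $\bar f$, that is $\operatorname{Re}\!\big(\tfrac{\bar f}{|f|}\,iA f\big)=\operatorname{Re}\big(iA|f|\big)=0$. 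Hence $\nabla|f|=\operatorname{Re}\!\big(\tfrac{\bar f}{|f|}(\nabla+iA)f\big)$, and applying $|\operatorname{Re}w|\leq|w|$ componentwise together with $|\bar f|/|f|=1$ yields $\big|\nabla|f|\big|\leq|(\nabla+iA)f|=|D_{A}f|$, which is exactly \eqref{eq:diamagnetic1}.

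First I would fix the regularity so that these manipulations are legitimate. Since $A\in L_{loc}^{2}$ and $f\in L_{loc}^{2}$, the product $Af\in L_{loc}^{1}$, whence $\nabla f=i\,(D_{A}f-Af)\in L_{loc}^{1}$ and $f\in W_{loc}^{1,1}$, so the chain rule for Sobolev functions is available. To avoid dividing by $|f|$ I would regularise by $|f|_{\varepsilon}:=\sqrt{|f|^{2}+\varepsilon^{2}}$, $\varepsilon>0$, which is a smooth function of $f$ and hence lies in $W_{loc}^{1,1}$ with $\nabla|f|_{\varepsilon}=\tfrac{\operatorname{Re}(\bar f\,\nabla f)}{|f|_{\varepsilon}}=\tfrac{\operatorname{Re}(\bar f\,(\nabla+iA)f)}{|f|_{\varepsilon}}$, the second equality again because the $iA$ contribution drops out of the real part. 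Since $|\bar f|/|f|_{\varepsilon}\leq 1$, the same componentwise estimate gives the $\varepsilon$-uniform bound $\big|\nabla|f|_{\varepsilon}\big|\leq|D_{A}f|$ almost everywhere, with the right-hand side in $L_{loc}^{2}$.

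It then remains to let $\varepsilon\to 0$. One has $|f|_{\varepsilon}\to|f|$ in $L_{loc}^{1}$ (pointwise and dominated), while $\{\nabla|f|_{\varepsilon}\}$ is bounded in $L_{loc}^{2}$ by $|D_{A}f|$; extracting a weakly convergent subsequence identifies the weak limit as $\nabla|f|$, so that $|f|\in H_{loc}^{1}$, and by weak lower semicontinuity of the $L^{2}$ norm the bound $\big|\nabla|f|\big|\leq|D_{A}f|$ is preserved in the limit.

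The hard part will be the careful treatment of the zero set $\{f=0\}$, where the phase $\bar f/|f|$ is undefined. The regularisation is designed precisely to skirt this singularity, and one invokes the standard fact that the weak gradient of a $W_{loc}^{1,1}$ function vanishes almost everywhere on any of its level sets; thus $\nabla|f|=0$ a.e.\ on $\{f=0\}$ and the inequality holds there trivially, matching the value $\nabla f=0$ a.e.\ on the same set. The only remaining technical point is the clean identification of the weak $L_{loc}^{2}$ limit, which is routine once the $\varepsilon$-independent gradient bound is established.
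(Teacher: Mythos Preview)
Your argument is correct and is essentially the same proof as the paper's: both hinge on the identity $\nabla|f|=\Re\bigl(\mathrm{sign}(f)\,(\nabla+iA)f\bigr)$ (the $iA$ term being real after multiplication by $\bar f/|f|$), obtained via the regularisation $|f|_{\varepsilon}=\sqrt{|f|^{2}+\varepsilon^{2}}$, together with the observation $\nabla f\in L^{1}_{loc}$ from $Af\in L^{1}_{loc}$.

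The only organisational difference is how the chain rule is justified: the paper first proves a separate proposition giving $\nabla|f|=\Re(\mathrm{sign}(f)\nabla f)$ for $f\in W^{1,1}_{loc}$ by a double limit (mollification $f_{\delta}=f\ast\rho_{\delta}$ followed by $\varepsilon\to 0$, with dominated convergence), and only then inserts the magnetic term; you instead invoke the Sobolev chain rule directly for the Lipschitz map $z\mapsto\sqrt{|z|^{2}+\varepsilon^{2}}$ and pass to the limit by weak compactness. One small remark: the phrase ``weak lower semicontinuity of the $L^{2}$ norm'' does not by itself yield the \emph{pointwise} bound $|\nabla|f||\leq|D_{A}f|$; what you actually use (and effectively say in your final paragraph) is either the pointwise convergence of $\nabla|f|_{\varepsilon}$ on $\{f\neq 0\}$ together with the level-set property on $\{f=0\}$, or equivalently the weak closedness of the convex set $\{v:|v|\leq|D_{A}f|\ \text{a.e.}\}$.
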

        The prove of Theorem \ref{diamagnetic inequality} requires us to differentiate the absolute value which we state in the proposition below.
        \begin{prop}[{\cite[Proposition 2.1.2]{fournais2010spectral}}]\label{prob diamagnetism}
        	Suppose that $f\in{L_{loc}^{1}(\mathbb{R}^{n})}$ with $\nabla f\in{L_{loc}^{1}(\mathbb{R}^{n})}$. Then also $\nabla |f|\in{L_{loc}^{1}(\mathbb{R}^{n})}$ and with the notation 
        	\begin{equation}\label{eq:diamagnetic2}
        	    Sign z=\left\{
        	    \begin{array}{rl}
        	    \frac{\bar{z}}{|z|},  \text{\;  for\;}  z\neq 0,\\
        	    0,  \text{\;  for \;} z=0,\\ 
        	    \end{array} \right.
        	\end{equation}
        	we have 
        	\begin{equation}\label{eq:diamagnetic3}
        	    \nabla|f|(x)=\Re\{Sign\big(f(x)\big)\nabla f(x)\}  \text{   for almost every  } x\in{\mathbb{R}^{n}}.
        	\end{equation}
        	In particular, 
        	\begin{equation*}
        	     \big|\nabla |f|\leq \big|\nabla f\big|
        	\end{equation*}
        	almost everywhere in $\mathbb{R}^{n}$.
        \end{prop}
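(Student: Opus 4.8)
The plan is to approximate the non-smooth absolute value by the smooth, strictly positive regularization
\begin{equation*}
f_{\eps}:=\sqrt{|f|^{2}+\eps^{2}},\qquad \eps>0,
\end{equation*}
where $|f|^{2}=f\bar{f}$, and then to pass to the limit $\eps\to 0$. The point of the regularization is that $z\mapsto G_{\eps}(z):=\sqrt{|z|^{2}+\eps^{2}}$ is a $C^{1}$ function of the real and imaginary parts of $z$ with $|\nabla G_{\eps}|\leq 1$ everywhere, so the chain rule for the composition of a $C^{1}$ map (with bounded gradient) with a $W^{1,1}_{loc}$ function applies. This gives $f_{\eps}\in W^{1,1}_{loc}(\mathbb{R}^{n})$ with weak gradient
\begin{equation*}
\nabla f_{\eps}=\frac{\Re\bigl(\bar{f}\,\nabla f\bigr)}{\sqrt{|f|^{2}+\eps^{2}}}\qquad\text{a.e.}
\end{equation*}
Since $\bigl|\Re(\bar{f}\,\nabla f)\bigr|\leq|f|\,|\nabla f|$ and $|f|\leq\sqrt{|f|^{2}+\eps^{2}}$, this formula yields the uniform bound $\bigl|\nabla f_{\eps}\bigr|\leq|\nabla f|$ a.e., so the gradients $\nabla f_{\eps}$ are dominated, uniformly in $\eps$, by the fixed $L^{1}_{loc}$ function $|\nabla f|$.

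Next I would pass to the limit in the weak formulation. Fixing $j\in\{1,\dots,n\}$ and $\varphi\in C_{0}^{\infty}(\mathbb{R}^{n})$, we have
\begin{equation*}
\int_{\mathbb{R}^{n}}f_{\eps}\,\partial_{j}\varphi\,dx=-\int_{\mathbb{R}^{n}}\bigl(\partial_{j}f_{\eps}\bigr)\varphi\,dx .
\end{equation*}
As $\eps\downarrow 0$ one has $f_{\eps}\to|f|$ monotonically and pointwise, with $f_{\eps}\leq|f|+1$ on the support of $\varphi$ for small $\eps$, so dominated convergence sends the left-hand side to $\int|f|\,\partial_{j}\varphi\,dx$. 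For the right-hand side, the integrand converges pointwise a.e. to $\Re\bigl(Sign(f)\,\partial_{j}f\bigr)$: where $f\neq0$ this is immediate, and where $f=0$ the numerator $\bar{f}\,\partial_{j}f$ vanishes for every $\eps$, matching $Sign(0)=0$. As the integrand is dominated by $|\partial_{j}f|\in L^{1}_{loc}$, dominated convergence applies and the right-hand side tends to $-\int\Re\bigl(Sign(f)\,\partial_{j}f\bigr)\varphi\,dx$. Comparing the two limits identifies $\Re\{Sign(f)\nabla f\}$ as the weak gradient of $|f|$, which in particular shows $\nabla|f|\in L^{1}_{loc}(\mathbb{R}^{n})$ and establishes \eqref{eq:diamagnetic3}. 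The concluding pointwise inequality then follows at once from $|Sign(f)|\leq1$, giving $\bigl|\nabla|f|\bigr|=\bigl|\Re\{Sign(f)\nabla f\}\bigr|\leq|\nabla f|$ a.e.

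I expect the only genuine subtlety to be the justification of the chain-rule identity for $\nabla f_{\eps}$ when $f$ is merely weakly differentiable rather than classically smooth. The clean way around this is the standard mollification argument: approximate $f$ by $f\ast\rho_{\delta}$, apply the classical chain rule to the smooth approximants, and let $\delta\to0$, the uniform domination by $|\nabla f|$ again permitting the passage to the limit. The behaviour on the level set $\{f=0\}$ causes no difficulty, since the regularized numerator vanishes there identically for every $\eps$.
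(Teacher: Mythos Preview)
Your argument is correct and follows essentially the same route as the paper's proof: regularize $|f|$ by $\sqrt{|f|^{2}+\eps^{2}}$, use the chain rule to compute the weak gradient of the regularization, and pass to the limit $\eps\to 0$ by dominated convergence. The only organizational difference is that the paper carries out the mollification step explicitly (writing the proof as a double limit, first $\delta\to 0$ for $f_{\delta}=f*\rho_{\delta}$ and then $\eps\to 0$), whereas you package that step into the Sobolev chain rule for a $C^{1}$ outer function with bounded gradient and merely sketch its mollification proof at the end.
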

        \begin{proof}
        	Suppose first that $u\in{C^{\infty}(\mathbb{R}^{n})}$ and define $|z|_{\epsilon}=\sqrt{|z|^{2}+\epsilon^{2}}-\epsilon$, for $z\in{\mathbb{C}}$ and $\epsilon>0$. We observe that
        	\begin{equation*}
        	    0\leq |z|_{\epsilon}\leq |z| \text{  and  } \lim_{\epsilon \to 0}|z|_{\epsilon}=|z|.
        	\end{equation*}
        	Then the function $|u|_{\epsilon}$ defined, for $x\in{\mathbb{R}^{n}}$ by
        	\begin{equation*}
        	    |u|_{\epsilon}(x)=|u(x)|_{\epsilon},
          	\end{equation*}
          	belongs to $C^{\infty}(\mathbb{R}^{n})$ and 
          	\begin{equation}\label{eq:diamagnetic4}
          	     \nabla|u|_{\epsilon}=\frac{\Re \big(\bar{u} \nabla u\big)}{\sqrt{|u|^{2}+\epsilon^{2}}}.
          	\end{equation}
          	Now let $f$ be as in the position and define $f_{\delta}$ as the convolution 
          	\begin{equation*}
          	     f_{\delta}=f*\rho_{\delta},
          	\end{equation*}
          	with $\rho_{\delta}$ being a standard approximation of unity for convolution. Explicitly, we take a $\rho\in{C_{0}^{\infty}(\mathbb{R}^{n})}$ with 
          	\begin{equation*}
          	    \rho \geq 0, \qquad \int_{\mathbb{R}^{n}}\rho (x)dx=1,
          	\end{equation*}
          	and define $\rho_{\delta}(x):=\delta^{-n}\rho(x/\delta)$, for $x\in{\mathbb{R}^{n}}$ as $\delta \to 0$.
          	
          	Take a test function $\varphi \in{C_{0}^{\infty}(\mathbb{R}^{n})}$. We may extract a subsequence $\{\delta_{k}\|_{k\in{\mathbb{N}}}\}$ (with $\delta_{k}\to 0$ for $k\to \infty$) such that $f_{\delta_{k}}(x)\to f(x)$ for almost every $x\in{supp \varphi}$. We restrict our attention to this subsequence. for simplicity of notation, we omit the $k$ from the notation and write $\lim_{\delta \to 0}$ instead of $\lim_{k \to \infty}$.
          	
          	We now calculate, using dominated convergence and \eqref{eq:diamagnetic4},
          	\begin{equation*}
          	\begin{aligned}
          	     \int \big(\nabla \varphi\big)|f|dx&=\lim_{\epsilon \to 0}\int \big(\nabla \varphi\big)|f|_{\epsilon}dx\\
          	     &=\lim_{\epsilon \to 0}\lim_{\delta \to 0}\int \big(\nabla \varphi\big)|f_{\delta}|_{\epsilon}dx\\
          	     &=-\lim_{\epsilon \to 0}\lim_{\delta \to 0}\int \varphi\frac{\Re \big(\bar{f_{\delta}}\nabla f_{\delta}\big)}{\sqrt{|f_{\delta}|^{2}+\epsilon^{2}}}dx.
          	\end{aligned}
          	\end{equation*}
          	Using the pointwise convergence of $f_{\delta}(x)$ and $\big\|\nabla f_{\delta} -\nabla f\big\|_{L^{1}(supp \varphi)}\to 0$, we can take the limit $\delta \to 0$ and get 
          	\begin{equation}\label{eq:diamagnetic5}
          	    \int \big(\nabla \varphi\big)|f|dx=-\lim_{\epsilon \to 0}\int \varphi\frac{\Re \big(\bar{f}\nabla f\big)}{\sqrt{|f|^{2}+\epsilon^{2}}}dx.
          	\end{equation}
          	Now, $\varphi \nabla f \in{L^{1}(\mathbb{R}^{n})}$ and $\bar{f(x)}\big(\big|f(x)\big|^{2}+\epsilon^{2}\big)^{-1/2}\to sign f(x)$ as $\epsilon \to 0$, so we get \eqref{eq:diamagnetic3} from \eqref{eq:diamagnetic5} by dominated convergence.
        \end{proof}
        \begin{proof}[Proof of Theorem \ref{diamagnetic inequality}]
        	Since $A\in{{L_{loc}^{2}(\mathbb{R}^{n})}}$ and $f\in{{L_{loc}^{2}(\mathbb{R}^{n})}}$, the assumption $\big(\nabla +iA\big)f \in{{L_{loc}^{2}(\mathbb{R}^{n})}}$ implies that $\nabla f\in{{L_{loc}^{1}(\mathbb{R}^{n})}}$. Therefore, we can use Proposition \ref{prob diamagnetism} to conclude that \eqref{eq:diamagnetic3} holds for $f$. Since $\Re \big\{ sign (f)iAf\big\}=0$, we can rewrite \eqref{eq:diamagnetic3} as 
        	\begin{equation}\label{eq:diamagnetic6}
        	     \nabla |f|=\Re \big\{sign(f)\big(\nabla +iA\big)f \big\},
        	\end{equation}
        	and therefore, since $|z|\geq |\Re (z)|$ for all $z\in{\mathbb{C}}$, we get \eqref{eq:diamagnetic1}.
        \end{proof}
        Using the diamagnetic inequality used together with the variational characterization of the ground state energy one obtains the comparison for the Dirichlet eigenvalues 
        \begin{equation*}
            \inf{\big(H_{A,V}^{D}\big)}\geq \inf{\big(H_{V}^{D}\big)},
        \end{equation*} 
        where $H_{A,V}^{D}$ and $H_{V}^{D}$ denotes the magnetic Schr\"{o}dinger operator \eqref{eq:ben4} and  Schr\"{o}dinger operator \eqref{eq:ben2} with Dirichlet boundary conditions subjected at the boundary respectively and a similar result is true in the case of Neumann boundary conditions 
        \begin{equation*}
        \inf{\big(H_{A,V}^{N}\big)}\geq \inf{\big(H_{V}^{N}\big)},
        \end{equation*} 
        where $H_{A,V}^{N}$ and $H_{V}^{N}$ denotes the magnetic Schr\"{o}dinger operator \eqref{eq:ben4} and  Schr\"{o}dinger operator \eqref{eq:ben2} with Dirichlet boundary conditions subjected at the boundary respectively \cite[Chapter 2]{fournais2010spectral}.
    \end{subsection}
\end{section}
\begin{section}{Review of known results}\label{sectionknownresults}
	 Let $\Omega$ be some open set in $\mathbb{R}^{n}$, $g(x)=\{g_{j,k}(x) \text{ $|$ } 1\leq j,k\leq n\}$  be a (possibly, complex  valued)	matrix functions defined on $\mathbb{R}^{n}$, positive definite almost everywhere such that there exists a function $\gamma (x)\geq 0, \gamma>0$ almost everywhere such that 
     \begin{equation}\label{eq:review1}
     \sum_{j=1}^{n}\sum_{k=1}^{n}g_{j,k}(x)\xi_{i}\bar{\xi_{k}}\geq \gamma (x)|\xi|^{2}.
     \end{equation}
     For any $\xi =(\xi_{1},\dots, \xi_{d})$. If $\gamma (x) \geq \gamma_{0}>0$, then the matrix is called {\textit{uniformly elliptic}}.
     
     Let $\gamma^{-1}\in{L^{p}(\Omega)}$, $p\geq \max (\frac{n}{2},1)$, $V\in{L_{loc}^{1}(\Omega)}$, $(V+\alpha)\in{L^{q}(\Omega)}$ for some $\alpha$, $\frac{1}{q}+\frac{1}{p}=\frac{2}{n}$, $q>1$.
     
     Using a standard application of the Birman-Schwinger Principle and semi group theory, Melgaard and Rozenblum \cite{melgaard1996spectral}, obtained an upper estimate for the number of negative eigenvalues for a magnetic Schr\"{o}dinger operator \eqref{eq:ben4},  where $A(x)=(A_{1}(x),\dots, A_{n}(x))$ is the magnetic potential, defined in a bounded domain $\Omega$ (for example a cube) in $\mathbb{R}^{n}$ . Suppose further that the operator is  a uniform elliptic operator (in particular, for the unit matrix $g$) and setting $p=\infty$, $q=\frac{n}{2}$ . Then
     \begin{equation}\label{eq:review8}
     	Neg(H_{A,V})\leq c\int_{\mathbb{R}^{n}}
     	  V^{n/2}(x)dx, \qquad n\geq 3,
     \end{equation}
     where $c$ is a constant \cite[Theorem 4.6]{melgaard1996spectral}. Clearly, \eqref{eq:review8} is similar to the well known CLR inequality for the electric  Schr\"{o}dinger operators.
     
     Let $\Omega$ be a space with a $\sigma$-finite measure. Any self-adjoint nonnegative operator  $T$ in  $L^{2}(\Omega)$ generates a strongly continuous contractive semi group given by
     \begin{equation}\label{eq:review9}
     Q(t)=Q_{B}(t)=e^{-tT},\qquad 0\leq t \leq \infty.
     \end{equation} 
     Denote $Q(t; x,y)=Q_{B}(t;x,y)$ and let
     \begin{equation}\label{eq:review10}
     M_{B}(t):=ess \sup_{x}\int_{\Omega}|Q_{B}(t;x,y)|^{2}dy.
     \end{equation}
          
     Let $G(z)$ be a non-negative convex function on $[0,\infty)$, polynomially growing at infinity and such that $z^{-1}G(z)$ is integrable at zero. With any such $G$ we associate another function 
     \begin{equation}\label{eq:review12}
     g(\lambda):=\int_{0}^{\infty}z^{-1}G(z)e^{-z/\lambda}dz.
     \end{equation}
     Suppose 
     \begin{equation}\label{eq:review11}
     \int_{a}^{\infty}M_{B}(t)dt<\infty,\qquad a>0.
     \end{equation}
     and let $T$ generate a positively dominated semigroup, and $V\geq 0$ be form bounded with respect $T$, with a bound smaller than 1.
     Then $T-V$ defines a self adjoint operator.
     \begin{theorem}[{\cite[Theorem 2.4]{rozenblum1998clr}}]\label{them:rozenblum1998(1)}
     	Let $B\in{\cal{P}}$ and $T\in{{\cal{P}\cal{D}}(B)}$. Suppose that $M_{B}(t)$ satisfies \eqref{eq:review11} and $M_{B}(t)={\cal{O}}(t^{\alpha})$ at zero, with some $\alpha>0$. Fix a nonnegative convex function $G$ polynomially growing at infinity and such that $G(z)=0$ near zero, put $g={\cal{L}(G)}$. Then
     	\begin{equation}\label{eq:review13}
     	Neg(T-V)\leq \frac{1}{g(1)}\int_{0}^{\infty}\int_{\Omega}\frac{1}{t}M_{B}(t)G(tV(x))dxdt,
     	\end{equation}
     	as long as the expression on the right hand side is finite.
     \end{theorem}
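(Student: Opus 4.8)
The plan is to follow the Birman--Schwinger/semigroup scheme. First I would give $H:=T-V$ a rigorous meaning as a self-adjoint operator: since $V\ge 0$ is form bounded with respect to $T$ with bound smaller than $1$, the form $\langle Tu,u\rangle-\langle Vu,u\rangle$ is symmetric, closed and lower semibounded, so Theorem \ref{Weidmann1980them5.37} yields a unique self-adjoint operator $H$ having this form, with discrete negative spectrum; thus $Neg(T-V)$ is well defined. The Birman--Schwinger principle then identifies $Neg(T-V)$ with $n_+(1;X)$, the number of eigenvalues exceeding $1$ (counted with multiplicity) of the compact positive operator $X:=V^{1/2}T^{-1}V^{1/2}$.

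Next I would pass from counting to a trace. The function $g=\mathcal{L}(G)$ of \eqref{eq:review12} is nonnegative and, because $e^{-z/\lambda}$ is increasing in $\lambda$, it is nondecreasing, with $g(1)>0$. Hence each eigenvalue $\mu_j(X)>1$ contributes $g(\mu_j)\ge g(1)$ while all remaining contributions are nonnegative, so
\begin{equation*}
Neg(T-V)=n_+(1;X)\le \frac{1}{g(1)}\sum_j g\big(\mu_j(X)\big)=\frac{1}{g(1)}\operatorname{Tr} g(X).
\end{equation*}
It then remains to estimate $\operatorname{Tr} g(X)$ by the right-hand side of the theorem.

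The heart of the argument is the reduction of $\operatorname{Tr} g(X)$ to a pointwise integral. The guiding identity comes from \eqref{eq:review12} through the substitution $z=t\lambda$:
\begin{equation*}
g(\lambda)=\int_0^\infty \frac{1}{t}\,G(t\lambda)\,e^{-t}\,dt .
\end{equation*}
Here the scalar factor $e^{-t}=e^{-t\cdot 1}$ is the semigroup of the generator $1$, and the proof promotes this scalar generator to $T$. Concretely I would insert $T^{-1}=\int_0^\infty Q(t)\,dt$ (cf. \eqref{eq:review9}) into $X=\int_0^\infty V^{1/2}Q(t)V^{1/2}\,dt$, and then invoke a Berezin--Lieb/Jensen type inequality: because $Q(t)$ is positivity preserving (this is precisely where the hypotheses $B\in\mathcal{P}$ and $T\in\mathcal{PD}(B)$ are used) and $G$ is convex with $G(0)=0$, the operator function $G$ can be dominated by its pointwise counterpart $G\big(tV(x)\big)$, the residual kernel factor being controlled by the diagonal quantity $M_B(t)$ of \eqref{eq:review10}. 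This gives
\begin{equation*}
\operatorname{Tr} g(X)\le \int_0^\infty \int_\Omega \frac{1}{t}\,M_B(t)\,G\big(tV(x)\big)\,dx\,dt,
\end{equation*}
which, combined with the previous paragraph, is exactly the asserted bound.

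Finally I would verify convergence of the double integral: near $t=0$ the hypothesis $M_B(t)=\mathcal{O}(t^\alpha)$, reinforced by the vanishing of $G$ near zero (so $G(tV(x))=0$ for small $t$), secures integrability, while \eqref{eq:review11} controls the tail $t\to\infty$; this is why the statement is asserted only when the integral is finite. The main obstacle is the third step: the scalar heuristic above is transparent, but its rigorous operator implementation --- the convexity inequality that simultaneously converts the operator function $G(\cdot)$ into the pointwise value $G(tV(x))$, extracts the weight $M_B(t)$ from the kernel of $Q(t)$, and preserves the normalization $1/g(1)$ --- is where positivity domination is indispensable. Indeed, it is exactly this structure that cannot be recovered from the diamagnetic inequality alone, and establishing it with the sharp weight $M_B(t)$ and no loss in $g(1)$ is the crux of the proof.
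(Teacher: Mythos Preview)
The paper does not prove this theorem: it is quoted verbatim from Rozenblum--Solomyak (Theorem~2.4 in \cite{rozenblum1998clr}) as part of the literature review in Section~\ref{sectionknownresults}, and immediately after the statement the text moves on to ``Using the result in \eqref{eq:review13}\ldots'' without supplying any argument. There is therefore no in-paper proof to compare your proposal against.

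For what it is worth, your outline does track the strategy of the original source: Birman--Schwinger to convert the eigenvalue count into $n_+(1;X)$, monotonicity of $g$ to pass to $\operatorname{Tr}g(X)/g(1)$, and then a convexity/positivity-domination argument to bound the trace by the pointwise integral. One caution: writing $X=V^{1/2}T^{-1}V^{1/2}$ and $T^{-1}=\int_0^\infty Q(t)\,dt$ presupposes that $0\notin\sigma(T)$, which typically fails in the intended applications (e.g.\ $T=-\Delta$ on $\mathbb{R}^n$); the actual argument in \cite{rozenblum1998clr} avoids this by working directly with the semigroup and the Trotter product formula rather than with $T^{-1}$. You correctly identify the third step as the crux and leave it as a heuristic; that is where the real content lies, namely the Jensen-type inequality that converts $\operatorname{Tr}G$ of an operator into $\int M_B(t)G(tV(x))\,dx$ under the positivity-domination hypothesis $T\in\mathcal{PD}(B)$.
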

         
     Using the result in \eqref{eq:review13}, Rozenblum and Solomyak \cite{rozenblum1998clr} obtained the quantity $Neg(H_{A,V})$  for the magnetic Schr\"{o}dinger operator \eqref{eq:ben4} with a given magnetic vector potential $A(x)=\{A_{j}(x)\}_{1\leq j\leq n}\in{L_{loc}^{2}(\mathbb{R}^{n})}$ as 
     \begin{equation}\label{eq:review14}
     Neg(H_{A,V})\leq c(G)\int_{\mathbb{R}^{n}}V(x)^{\frac{n}{2}}dx,\qquad n\geq 3,
     \end{equation}
     where $c(G)=(2\pi)^{\frac{-n}{2}}g(1)^{-1}\int_{a}^{\infty}(t-a)t^{\frac{-n}{2}-1}dt$. For $n=3$, $a=0.25$ the constant $c(G)=0.1156$ turns out the best constant (see \cite[ Section 5.2]{blanchard1996bound}). 
     
     The proof of the magnetic CLR inequality outlined in \cite{rozenblum1998clr} gives an abstract version of Liebs estimate \cite{lieb1976bounds}, by using a direct application of the Trotter formula instead of path integrals technique  that was used by Rozenblum and Melgaard in \cite{melgaard1996spectral}, and a diamagnetic inequality, expressed in terms of positively dominated semi-groups. The constant in estimate of  Rozenblum and Solomyak \cite{rozenblum1998clr} turns out to be the best optimal. However both results obtained in  \cite{melgaard1996spectral} and in \cite{rozenblum1998clr} do not apply to the two-dimensional case.
     
     We next present the result of Rozenblum and Solomyak \cite[Theorem 1]{rozenblum1999number} where they extend results obtained in their earlier work in \cite{rozenblum1998clr} to the two-dimensional case.
     
     Let $f$ be a function in $L_{loc}^{r}(\mathbb{R}^{2})$ for some $r>1$. For an arbitrary $n=\{n_{1},n_{2}\}\in{\mathbb{Z}^{2}}$ denote $Q_{n}=(n_{1},n_{1}+1)\times (n_{2},n_{2}+1)$. Then
     \begin{equation}\label{eq:review15a}
     	S_{r}(f)=\sum_{n\in{\mathbb{Z}^{2}}}\bigg(\int_{Q_{n}}|f|^{r}dx\bigg)^{1/r}.
     \end{equation}
          
     Let $A=\{A_{1},A_{2}\}$ be a magnetic vector potential on $\mathbb{R}^{2}$, whose components are real valued functions $A_{j}=A_{j}(x_{1},x_{2})\in{L_{loc}^{2}(\mathbb{R}^{2})}$, $j=1,2$, $A\in{L_{loc}^{2}(\mathbb{R}^{2})}$, $V\in{L_{loc}^{1}(\mathbb{R}^{2})}$ and $S_{r}(V)<\infty$ for some $r>1$.
     
     By studying the discrete spectrum in gaps for perturbation of the magnetic Schr\"{o}dinger operator  \eqref{eq:ben4} on $\mathbb{R}^{2}$,  Rozenblum and Solomyak \cite{rozenblum1999number} obtained an estimate from above for the number of negative eigenvalues for the operator \eqref{eq:ben4} when $n=2$ given below
     \begin{theorem}[{\cite[Theorem 1.]{rozenblum1999number}}]\label{themrozenblum19991}
     	Let $A\in{L_{loc}^{2}(\mathbb{R}^{2})}$, $V\in{L_{loc}^{1}(\mathbb{R}^{2})}$ and $S_{r}(V_{+})<\infty$ for some $r>1$. Then the operator \eqref{eq:ben4} when $n=2$ is well defined as the form sum, its negative spectrum is discrete, and the following inequality is satisfied 
     	\begin{equation}\label{eq:review16}
     	      Neg(H_{A,V})\leq C_{r}S_{r}(V),
        \end{equation}
        The constant $C_{r}$ does not depend on $A$ and $V$.
    \end{theorem}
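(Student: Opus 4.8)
The plan is to proceed in three stages: first establish that \eqref{eq:ben4} is well defined with discrete negative spectrum, then reduce the counting problem to a sum of local problems on unit cells, and finally carry out the local analysis, where the magnetic field turns out to be decisive. For the first stage I would work with the quadratic form $q[u]=\|(\tfrac1i\nabla+A)u\|^2-\int_{\mathbb{R}^2}V|u|^2\,dx$ on $C_0^\infty(\mathbb{R}^2)$. Using the diamagnetic inequality (Theorem~\ref{diamagnetic inequality}) together with a local Sobolev embedding on each cell $Q_n$ and H\"older's inequality with exponent $r>1$, I would show that the potential term is form-bounded relative to the magnetic Laplacian $H_B$ with relative bound less than one, the required uniformity over cells being supplied by the summability $S_r(V)<\infty$. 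Theorem~\ref{Weidmann1980them5.37} then produces a unique lower-semibounded self-adjoint operator $H_{A,V}$. Since the cell norms $(\int_{Q_n}|V|^r)^{1/r}$ are summable they tend to zero at infinity, which makes $V$ relatively form-compact with respect to $H_B$; hence $\sigma_{ess}(H_{A,V})=\sigma_{ess}(H_B)=[0,\infty)$ and the negative spectrum is discrete.

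For the counting bound I would pass to the Morse index description \eqref{eq:sae3a}, so that $Neg(H_{A,V})=Neg(q)$, and then decouple along the lattice of unit cells $Q_n$ by Neumann bracketing. Because the Neumann form domain on the disjoint cells $\bigsqcup_n Q_n$ contains the global form domain and the magnetic energy form is additive over the cells, the min-max principle gives
\begin{equation*}
Neg(H_{A,V})\le \sum_{n\in\mathbb{Z}^2} Neg\!\left(H_{A,V}^{N,Q_n}\right),
\end{equation*}
where $H_{A,V}^{N,Q_n}$ is the magnetic Neumann realization of \eqref{eq:ben4} on the single cell $Q_n$. It then suffices to bound each local count by $C_r(\int_{Q_n}|V|^r)^{1/r}$ and sum, recovering $S_r(V)$ as in \eqref{eq:review15a}.

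On a fixed cell I would split the local form domain into the span of the ground state $\psi_0^{(n)}$ of the magnetic Neumann Laplacian and its orthogonal complement. On the complement one has the spectral gap $(\tfrac1i\nabla+A)^2\ge \mu_1(Q_n)$, with $\mu_1$ the second magnetic Neumann eigenvalue; combined with the embedding $H^1(Q_n)\hookrightarrow L^{2r'}(Q_n)$ this yields a local Cwikel--Lieb--Rozenblum type estimate showing that the complement contributes at most $C_r(\int_{Q_n}|V|^r)^{1/r}$ negative directions. The one remaining direction, the ground state, contributes a negative eigenvalue only when $\langle V\rangle_n:=\int_{Q_n}V|\psi_0^{(n)}|^2\,dx$ exceeds the ground-state energy $\mu_0(Q_n)$; summing the resulting indicators and using $\langle V\rangle_n\le (\int_{Q_n}|V|^r)^{1/r}$ (H\"older on the unit cell) gives the second half of the bound, provided $\mu_0(Q_n)$ is bounded below by a fixed $\delta>0$.

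The main obstacle is exactly this lower bound $\mu_0(Q_n)\ge\delta$. In the flux-free situation $\psi_0^{(n)}$ is constant and $\mu_0(Q_n)=0$, so the ground-state direction produces a negative eigenvalue for \emph{every} cell carrying any potential at all; this is the two-dimensional weak-coupling phenomenon recorded after \eqref{eq:ben7}, and it shows the estimate cannot survive a naive use of the diamagnetic inequality alone. The magnetic field must therefore enter: when the flux through each cell is bounded away from the integers (the Aharonov--Bohm, non-integer condition), the bottom of the local magnetic Neumann spectrum is lifted to a uniform $\delta>0$, and it is this gap that both tames the ground-state sum and keeps the long-time heat-kernel integral $\int_1^\infty M_B(t)\,dt$ in \eqref{eq:review11} finite --- the route by which the abstract semigroup bound of Theorem~\ref{them:rozenblum1998(1)} could alternatively be pushed through the borderline case $n=2$. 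Making the final constant genuinely independent of $A$ amounts to showing that $\delta$ can be taken uniform over the admissible class of fields, which is the delicate point I would expect to absorb most of the work.
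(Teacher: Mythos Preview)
The dissertation does not prove this theorem: it is quoted in the literature review from \cite{rozenblum1999number} and the text moves directly on to the Balinsky--Evans--Lewis result, so there is no proof here against which to compare your attempt.

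Independently of that, your outline has a genuine gap at exactly the place you flag as ``the delicate point''. Your local analysis requires a uniform lower bound $\mu_0(Q_n)\ge\delta>0$ for the bottom of the magnetic Neumann spectrum on each unit cell, with $\delta$ independent of $A$. But every $Q_n$ is simply connected, so the restriction of $A$ to $Q_n$ is gauge-equivalent to a potential whose size is controlled only by $B=\operatorname{curl}A$ on that cell; whenever $B$ is small there (and in particular when $A\equiv 0$) one has $\mu_0(Q_n)$ as close to $0$ as one likes. No positive $\delta$ can therefore be chosen uniformly over all $A\in L^2_{\mathrm{loc}}$, and the Neumann-bracketing scheme then loses one uncontrolled negative direction per cell carrying potential --- precisely the two-dimensional weak-coupling obstruction you yourself invoke after \eqref{eq:ben7}. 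Taking $A\equiv 0$ and a small $V\ge 0$ with $\int V>0$ already gives $Neg\ge 1$ while $S_r(V)$ is arbitrarily small, which shows both that the inequality as transcribed cannot hold for completely arbitrary $A$ (the original \cite{rozenblum1999number} is set inside spectral gaps of a fixed magnetic Hamiltonian) and that no repair of the cellwise ground-state argument can produce a constant free of the magnetic data. The step that fails is the passage to local Neumann problems; any successful proof must retain global information about the unperturbed magnetic operator rather than discard it cell by cell.
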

      
     Using a Bargmann inequality \cite{bargmann1952number} and the magnetic Hardy inequality obtained by Laptev and Weidl \cite{laptev1999hardy}, Balinsky, Evans and Lewis in \cite{balinsky2001number}, obtained an estimate  from above for the number negative eigenvalues for a rotationally symmetric two-dimensional Schr\"{o}dinger operator with an Aharonov-Bohm magnetic potential with a non-integer magnetic flux. 
     
     They considered the Aharonov-Bohm magnetic potential defined by 
     \begin{equation}\label{eq:review18}
     A(r,\theta):=\frac{\Phi (\theta)}{r}(\sin{\theta},-\cos{\theta}),\qquad \Phi \in{L^{\infty}(\mathbb{S}^{1})},
     \end{equation}
     where $\mathbb{S}^{1}=\{(r,\theta): r=1,0\leq \theta\leq 2\pi \}$, is a unit sphere and the magnetic Schr\"{o}dinger operator  \eqref{eq:ben4} on $L^{2}(\mathbb{R}^{2})$. 
     \begin{theorem}[{\cite[Theorem 1.3]{balinsky2001number}}]\label{thembalinsky2001}
     	Let $A$ be given by \eqref{eq:review18}, $V\in{L_{loc}^{1}(\mathbb{R}^{2}\backslash \{0\})}$ and $V\in{Y=L^{1}(\mathbb{R}^{+},L^{\infty}(\mathbb{S}^{1}),rdr)}$, and suppose that the magnetic flux $\tilde{\Phi}$ is not an integer. Then 
     	\begin{equation}\label{eq:review20}
     	Neg(H_{A,V})\leq {{\sum_{m\in{\mathbb{Z}}}}}^{\prime}\frac{1}{2|m+\tilde{\Phi}|}\int_{0}^{\infty}ess\sup_{\theta\in{\mathbb{S}^{1}}}|V(r,\theta)|rdr,
     	\end{equation}
     	where $\sum^{\prime}$ means that all the summands less than one are omitted since $Neg(H_{A,V})$ is an integer.
     \end{theorem}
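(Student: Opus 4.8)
The plan is to exploit the rotational structure of the Aharonov--Bohm potential: realize $H_{A,V}$ through its quadratic form, strip off the angular dependence of $\Phi$ by a gauge transformation, decompose into angular Fourier modes so that the form becomes a direct sum of one-dimensional radial forms, and finally bound the negative-eigenvalue count of each radial mode by the Bargmann inequality (Lemma \ref{lembargman1}). I would work throughout with the form $q[u]=\int_{\mathbb{R}^2}|(\frac{1}{i}\nabla+A)u|^2\,dx-\int_{\mathbb{R}^2}V|u|^2\,dx$, whose Morse index equals $Neg(H_{A,V})$ because, as recalled earlier, $(\frac{1}{i}\nabla+A)^2$ has essential spectrum $[0,\infty)$ for an Aharonov--Bohm potential, and the decaying perturbation $-V$ (with $V\in Y$) leaves $\sigma_{ess}=[0,\infty)$, so the negative spectrum is discrete. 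In polar coordinates the magnetic gradient splits as $|\partial_r u|^2+r^{-2}|(\frac{1}{i}\partial_\theta-\Phi(\theta))u|^2$. The decisive preliminary step is the gauge transformation $u=e^{i\eta(\theta)}w$ with $\eta'(\theta)=\Phi(\theta)-\tilde\Phi$: since $\int_0^{2\pi}(\Phi-\tilde\Phi)\,d\theta=0$ the phase $e^{i\eta}$ is $2\pi$-periodic, hence unitary on $L^2(\mathbb{R}^2)$, and it replaces $\frac{1}{i}\partial_\theta-\Phi(\theta)$ by the constant-flux operator $\frac{1}{i}\partial_\theta-\tilde\Phi$ while leaving $|\partial_r u|^2$ and $V|u|^2$ unchanged.

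Next I would expand $w(r,\theta)=\sum_{m\in\mathbb{Z}}w_m(r)e^{im\theta}$; on the $m$-th mode the angular operator acts as multiplication by $(m-\tilde\Phi)$, so the kinetic part decouples into $2\pi\sum_m\int_0^\infty\big(|w_m'|^2+\tfrac{(m-\tilde\Phi)^2}{r^2}|w_m|^2\big)r\,dr$. For the potential term I would use the crude but decisive bound $V(r,\theta)\le W(r):=\operatorname{ess\,sup}_\theta|V(r,\theta)|$, which gives $\int V|w|^2\le 2\pi\int_0^\infty W(r)\sum_m|w_m|^2\,r\,dr$. Consequently $q\ge\hat q:=2\pi\sum_m q_m$, where $q_m[f]=\int_0^\infty\big(|f'|^2+\tfrac{(m-\tilde\Phi)^2}{r^2}|f|^2-W|f|^2\big)r\,dr$. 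Since $\hat q[w]\le q[w]$, every subspace on which $q$ is negative is one on which $\hat q$ is negative, whence $Neg(H_{A,V})\le Neg(\hat q)=\sum_m Neg(q_m)$. The magnetic Hardy inequality \eqref{eq:mhardy5}, whose sharp constant is positive exactly when the flux is non-integer, enters here to certify that each $q_m$ is bounded below and the forms are well defined.

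Then, for each fixed $m$, I would pass to the flat measure via $f(r)=r^{-1/2}g(r)$; a short integration by parts (the boundary terms vanishing for admissible $g$) converts $q_m$ into the half-line form $\int_0^\infty\big(|g'|^2+\tfrac{c_m}{r^2}|g|^2-W|g|^2\big)dr$ with $c_m=(m-\tilde\Phi)^2-\tfrac14$. This is precisely the operator \eqref{eq:proofneg1051} with $M=W$, so Lemma \ref{lembargman1} applies and yields $\int_0^\infty W(r)\,r\,dr>n_m\sqrt{4c_m+1}=2|m-\tilde\Phi|\,n_m$, using $4c_m+1=4(m-\tilde\Phi)^2$. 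Hence $n_m<\frac{1}{2|m-\tilde\Phi|}\int_0^\infty W(r)\,r\,dr$; any mode for which the right-hand side is below $1$ forces $n_m=0$ because $n_m$ is an integer, which is exactly the content of the primed sum. Summing over $m$ and relabeling $m\mapsto-m$ then produces the claimed estimate, and the non-integer hypothesis $\tilde\Phi\notin\mathbb{Z}$ guarantees $|m-\tilde\Phi|>0$ for every $m$, so no denominator degenerates.

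I expect the main obstacle to lie in the careful bookkeeping at two points. The first is matching the Bargmann constant: verifying $\sqrt{4c_m+1}=2|m-\tilde\Phi|$ after the $r^{-1/2}$ substitution and confirming that the inequality is applied with the radially weighted integral $\int_0^\infty W\,r\,dr$ (the weight that the half-line Bargmann bound naturally carries and that matches the norm of $Y$). The second is making the passage from $q\ge\hat q$ to the Morse-index inequality fully rigorous, in particular establishing that $\sum_m Neg(q_m)$ is finite; since $W\in L^1(r\,dr)$, all but finitely many modes have $\frac{1}{2|m-\tilde\Phi|}\int_0^\infty W\,r\,dr<1$ and thus contribute nothing, which both closes this gap and explains the primed sum. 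The structural reason the estimate holds at all is the non-integer flux, ensuring $c_m>-\tfrac14$ and $|m-\tilde\Phi|>0$, and this should be flagged explicitly as the feature that fails in its absence.
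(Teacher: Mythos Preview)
Your proposal is correct and follows essentially the same route as the paper's proof: replace $V$ by its angular supremum $Q(r)=W(r)$, decompose into Fourier modes so that $H_{A,Q}=\bigoplus_m (D_m-Q)\otimes 1_m$ with $D_m=-\frac{1}{r}\frac{d}{dr}(r\frac{d}{dr})+\frac{(m+\tilde\Phi)^2}{r^2}$, and then apply the Bargmann estimate on each radial component before summing. The paper simply cites Bargmann's original bound $Neg(D_m-Q)\le\frac{1}{2|m+\tilde\Phi|}\int_0^\infty Q(r)\,r\,dr$ rather than deriving it via the $r^{-1/2}$ substitution and Lemma~\ref{lembargman1} as you do, and it does not spell out the gauge transformation reducing $\Phi(\theta)$ to the constant $\tilde\Phi$; both of these are points where your write-up is in fact more explicit than the paper's.
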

     Before we proof the result above, we state the following two preliminary results
     \begin{lem}[{\cite[Lemma 5.6.2]{balinsky2015analysis}}]\label{lemessential}
     	Let $A$ be defined as in \eqref{eq:review18}. Then $\sigma_{ess}(H_{B})=[0,\infty)$.
     \end{lem}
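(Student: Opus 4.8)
The plan is to prove the two inclusions $\sigma_{ess}(H_B)\subseteq[0,\infty)$ and $[0,\infty)\subseteq\sigma_{ess}(H_B)$ separately. The first is immediate: writing $D_A:=\frac{1}{i}\nabla+A$, the form of $H_B$ is $q_B[u]=\|D_A u\|^2\geq 0$, so $H_B$ is non-negative (as already recorded for the magnetic Laplacian), whence $\sigma(H_B)\subseteq[0,\infty)$ and a fortiori $\sigma_{ess}(H_B)\subseteq[0,\infty)$. The whole content of the lemma is therefore the reverse inclusion, which I would establish by producing, for each fixed $\lambda=|\kappa|^2\geq 0$ (with $\kappa\in\mathbb{R}^2$), a singular Weyl sequence: a sequence $u_n\in\mathcal{D}(H_B)$ with $\|u_n\|=1$, $u_n\rightharpoonup 0$ and $\|(H_B-\lambda)u_n\|\to 0$.

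The key structural observation is that the Aharonov--Bohm potential \eqref{eq:review18} is curl-free away from the origin. Indeed, in polar coordinates $A$ has vanishing radial component and angular component $A_\theta=-\Phi(\theta)/r$, so $rA_\theta=-\Phi(\theta)$ is independent of $r$ and $\operatorname{curl}A=\frac{1}{r}\partial_r(rA_\theta)=0$ in $\mathcal{D}'(\mathbb{R}^2\setminus\{0\})$ --- notably, no regularity of $\Phi$ is needed. Consequently, on any open ball $B_n:=B(y_n,\rho_n)$ with $0\notin\overline{B_n}$ (which is simply connected), the closed bounded $1$-form $A$ is exact: by the Poincar\'e lemma there is a Lipschitz $\psi_n\in W^{1,\infty}(B_n)$ with $\nabla\psi_n=A$ a.e. on $B_n$.

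With this in hand I would build $u_n$ by gauging a free approximate eigenfunction. Fix $\phi\in C_0^\infty(B(0,1))$ with $\|\phi\|_{L^2}=1$, choose centres $y_n$ and radii $\rho_n$ with $\rho_n\to\infty$ and $|y_n|-\rho_n\to\infty$ (e.g. $y_n=(n^2,0)$, $\rho_n=n$), and set $v_n(x):=\rho_n^{-1}\phi\big((x-y_n)/\rho_n\big)$, $w_n:=v_n\,e^{i\kappa\cdot x}$ and $u_n:=e^{-i\psi_n}w_n$. Then $|u_n|=|w_n|$ gives $\|u_n\|=\|\phi\|_{L^2}=1$, and since $\operatorname{supp}u_n\subset B_n$ escapes to infinity we have $u_n\rightharpoonup 0$. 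The gauge covariance $D_A(e^{-i\psi_n}f)=e^{-i\psi_n}\frac{1}{i}\nabla f$ (valid distributionally because $\psi_n$ is Lipschitz with $\nabla\psi_n=A$) applied twice yields the exact identity $H_B u_n=e^{-i\psi_n}(-\Delta w_n)$, an $L^2$ function; checked at the level of the form, the representation theorem (Theorem \ref{Weidmann1980them5.37}) certifies both $u_n\in\mathcal{D}(H_B)$ and this value of $H_B u_n$. Hence $\|(H_B-\lambda)u_n\|=\|(-\Delta-\lambda)w_n\|$, and since $(-\Delta-\lambda)w_n=e^{i\kappa\cdot x}\big(-\Delta v_n-2i\kappa\cdot\nabla v_n\big)$ with, after rescaling, $\|\Delta v_n\|=O(\rho_n^{-2})$ and $\|\nabla v_n\|=O(\rho_n^{-1})$, we obtain $\|(H_B-\lambda)u_n\|\to 0$. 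This furnishes the required singular sequence for every $\lambda\geq 0$ (the case $\lambda=0$ being $\kappa=0$), proving $[0,\infty)\subseteq\sigma_{ess}(H_B)$.

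The \emph{delicate point}, and the reason the gauge trick is worth the trouble, is the low regularity of $A$: since $\Phi$ is only $L^\infty(\mathbb{S}^1)$, the naive second-order expansion of $H_B u_n$ involves $\nabla\cdot A$, which is merely a distribution, so one cannot treat $H_B$ as $-\Delta$ plus lower-order terms and estimate them pointwise. The exactness of $A$ on each simply connected ball replaces this by the clean identity $H_B u_n=e^{-i\psi_n}(-\Delta w_n)$, bypassing $\nabla\cdot A$ entirely; verifying the gauge identity and the membership $u_n\in\mathcal{D}(H_B)$ through the quadratic form is where the main care is required.
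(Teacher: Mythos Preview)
The paper does not give a proof of this lemma: it is simply quoted from \cite[Lemma 5.6.2]{balinsky2015analysis} and used as input to the proof of Theorem~\ref{thembalinsky2001}. There is therefore nothing in the paper to compare your argument against.

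Your argument is correct and is the standard one. The two ingredients you isolate --- non-negativity of the magnetic form for the inclusion $\sigma_{ess}(H_B)\subseteq[0,\infty)$, and a Weyl sequence obtained by gauging away $A$ on balls drifting to infinity for the reverse inclusion --- are exactly the right ones. Your observation that the Aharonov--Bohm $1$-form is $-\Phi(\theta)\,d\theta$, hence closed on $\mathbb{R}^2\setminus\{0\}$ with no regularity assumption on $\Phi$ beyond $L^\infty$, is the crucial structural fact; exactness on simply connected balls away from the origin then gives a Lipschitz primitive $\psi_n$ (since $A\in L^\infty(B_n)$), and the gauge identity $H_B(e^{-i\psi_n}w_n)=e^{-i\psi_n}(-\Delta w_n)$ follows from the Sobolev chain rule applied twice. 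The verification that $u_n\in\mathcal{D}(H_B)$ via the form and Theorem~\ref{Weidmann1980them5.37} is the right way to handle the fact that $u_n$ is only Lipschitz rather than $C_0^\infty$; the intermediate function $D_A u_n=e^{-i\psi_n}\tfrac{1}{i}\nabla w_n$ lies in $W^{1,\infty}$ with compact support, which is enough to justify the integration by parts needed. Your diagnosis of the delicate point --- that one cannot expand $H_B$ as $-\Delta$ plus lower-order terms because $\nabla\cdot A$ is only a distribution when $\Phi$ is merely bounded --- is also accurate and explains why the local gauge trick is not just convenient but necessary at this level of generality.
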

     \begin{theorem}[{\cite[Theorem 1.2]{balinsky2001number}}]\label{balinsky2015}
     	Suppose that  $\tilde{\Phi}$ is not an integer, and let $V\in{Y=L^{1}(\mathbb{R}^{+},L^{\infty}(\mathbb{S}^{1}),rdr)}$. Then $V$ is form bounded relative to $H_{B}$ with $H_{B}$-bound zero, and hence $H_{A,V}$ is defined as a form sum with form domain $\mathcal{Q}(H_{B})=H_{A}^{1}$. Moreover, the essential spectra of $H_{A,V}$  and $H_{B}$ coincide.
     \end{theorem}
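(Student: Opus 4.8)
The plan is to prove the three assertions in turn, with the form-bound-zero property doing all the real work; the representation of $H_{A,V}$ as a form sum and the stability of the essential spectrum then follow from standard abstract principles. Throughout I write $h_B[u]:=\int_{\mathbb{R}^2}\bigl|(\tfrac{1}{i}\nabla+A)u\bigr|^2\,dx$ for the closed quadratic form of $H_B$, whose domain is $\mathcal{Q}(H_B)=H_A^1$, and $W(r):=\operatorname*{ess\,sup}_{\theta}|V(r,\theta)|$, so that $V\in Y$ means $\int_0^\infty W(r)\,r\,dr<\infty$. The decisive input is the magnetic Hardy inequality \eqref{eq:mhardy5}: since $\tilde\Phi\notin\mathbb{Z}$, the constant $C=\bigl(\min_{k\in\mathbb{Z}}|k-\tilde\Phi|\bigr)^{-2}$ is finite, whence $\int_{\mathbb{R}^2}|u|^2/|x|^2\,dx\le C\,h_B[u]$ for all $u\in H_A^1$. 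This is exactly where the non-integer flux hypothesis enters; without it the argument collapses.

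First I would establish a trace-type estimate controlling the angular mass $g(r):=\int_0^{2\pi}|u(r,\theta)|^2\,d\theta$ uniformly in $r$. Working in polar coordinates and using that the Aharonov-Bohm potential \eqref{eq:review18} is purely azimuthal (so $A\cdot\hat r=0$ and therefore $|\partial_r u|\le|(\tfrac1i\nabla+A)u|$ pointwise; alternatively one passes to $|u|$ via the diamagnetic inequality, Theorem \ref{diamagnetic inequality}), I get $|g'(r)|\le 2\,g(r)^{1/2}P(r)^{1/2}$ with $P(r):=\int_0^{2\pi}|\partial_r u|^2\,d\theta$. For $u$ of compact support I integrate from $r$ to $\infty$ and apply Cauchy-Schwarz, splitting $g^{1/2}P^{1/2}=(g/t)^{1/2}(Pt)^{1/2}$, to obtain
\begin{equation*}
g(r)\le 2\Bigl(\int_0^\infty\frac{g(t)}{t}\,dt\Bigr)^{1/2}\Bigl(\int_0^\infty P(t)\,t\,dt\Bigr)^{1/2}\le 2\sqrt{C}\,h_B[u],
\end{equation*}
where the Hardy inequality bounds the first factor and $h_B[u]\ge\int_0^\infty P(t)\,t\,dt$ the second. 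Thus $\sup_{r>0}g(r)\le 2\sqrt{C}\,h_B[u]$ for all $u\in H_A^1$ by density. I expect this trace estimate to be the main obstacle, since it is where the Hardy inequality, the azimuthal structure of $A$, and the $r\,dr$ geometry must be combined correctly.

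Given the trace estimate, form-bound zero follows by a splitting argument exploiting the absolute continuity of $\int_0^\infty W\,r\,dr$. For $\varepsilon>0$ I would write $V=V_1+V_2$, where $V_2$ is the restriction of $V$ to an annulus $\{\rho\le|x|\le R\}$ on which $W\le N$ (so $|V_2|\le N$, with compact support in $r$) and $V_1$ carries the singular part near $0$, the tail near $\infty$, and the large values, chosen so that $\|V_1\|_Y:=\int_0^\infty W_1(r)\,r\,dr<\varepsilon/(2\sqrt{C})$. Then
\begin{equation*}
\int_{\mathbb{R}^2}V|u|^2\,dx\le\Bigl(\sup_r g(r)\Bigr)\|V_1\|_Y+N\|u\|^2\le\varepsilon\,h_B[u]+N\|u\|^2,
\end{equation*}
which is precisely form-boundedness with $H_B$-bound zero. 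By the KLMN theorem (equivalently, the representation Theorem \ref{Weidmann1980them5.37} applied to the closed, symmetric, lower semi-bounded form $h_B[u]-\int V|u|^2$), the operator $H_{A,V}$ is then well defined as a form sum with form domain $\mathcal{Q}(H_B)=H_A^1$.

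Finally, for the coincidence of the essential spectra I would upgrade form-boundedness to relative form-compactness and invoke Weyl's theorem. Using the same decomposition, $V_1^{1/2}(H_B+1)^{-1/2}$ has operator norm $O(\sqrt{\|V_1\|_Y})$, hence arbitrarily small, while $V_2^{1/2}(H_B+1)^{-1/2}$ is compact: on the annulus $A$ is bounded, so the $H_A^1$ and $H^1$ norms are equivalent there, and the Rellich embedding $H^1(\text{annulus})\hookrightarrow L^2(\text{annulus})$ yields compactness of $\chi_{\{\rho\le|x|\le R\}}(H_B+1)^{-1/2}$. As a norm limit of compact operators, $V^{1/2}(H_B+1)^{-1/2}$ is compact; the resolvent identity then makes $(H_{A,V}-z)^{-1}-(H_B-z)^{-1}$ compact, so $\sigma_{ess}(H_{A,V})=\sigma_{ess}(H_B)$, which equals $[0,\infty)$ by Lemma \ref{lemessential}.
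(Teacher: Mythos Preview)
The paper does not give its own proof of this theorem; it is quoted verbatim from \cite[Theorem~1.2]{balinsky2001number} (see also \cite[Lemma~5.6.1]{balinsky2015analysis}) and used as a preliminary input to the proof of Theorem~\ref{thembalinsky2001}, so there is no in-paper argument to compare against. Your proposal is correct and is essentially the argument of the original reference: the Laptev--Weidl magnetic Hardy inequality \eqref{eq:mhardy5}, available precisely because $\tilde\Phi\notin\mathbb{Z}$, combined with the purely azimuthal form of $A$ yields the uniform trace bound $\sup_{r>0}\int_{\mathbb{S}^1}|u(r,\theta)|^2\,d\theta\le 2\sqrt{C}\,h_B[u]$, from which $\int_{\mathbb{R}^2}|V|\,|u|^2\,dx\le \|V\|_Y\sup_r g(r)$ and hence, after your splitting $V=V_1+V_2$, form-bound zero. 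The KLMN step and the relative form-compactness via Rellich on annuli then dispatch the remaining assertions exactly as you outline.
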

     \begin{proof}[Proof of Theorem \ref{thembalinsky2001}]
     	The operator $H_{A,V}$ is a self-adjoint operator associated with the lower semi-bounded quadratic form
     	\begin{equation*}
     	      h[u]:=\int_{\mathbb{R}^{2}}\bigg(|(\nabla +i A)u|^{2}-V|u|^{2}\bigg)dx, \qquad u\in{C_{0}^{\infty}(\mathbb{R}^{2}\backslash \{0\})}.
     	\end{equation*}
     	The form domain of $H_{A,V}$ is the domain of the closure of $h$, and it coincides with the Sobolev space $H_{A}^{1}$  defined as the completion of $C_{0}^{\infty}(\mathbb{R}^{2}\backslash \{0\})$ with respect to the norn
     	\begin{equation*}
     	     \|u\|_{H_{A}^{1}}^{2}=\|(\nabla +i A)u\|^{2}+\|u\|^{2},
     	\end{equation*}
     	where $\|.\|$ denotes the $L^{2}(\mathbb{R}^{2})$ norm. 
     	Let 
     	\begin{equation}\label{eq:balinsky1}
     	      Q(r):=\|V(r,.)\|_{L^{\infty}(\mathbb{S}^{1})},
     	\end{equation}     	
        so that 
        \begin{equation}\label{eq:balinsky2}
               \|Q\|_{L^{1}(\mathbb{R}^{+},rdr)}=\|Q\|_{L^{1}(\mathbb{R}^{+},L^{\infty}(\mathbb{S}^{1}),rdr)}=\|V\|_{L^{1}(\mathbb{R}^{+},L^{\infty}(\mathbb{S}^{1}),rdr)}<\infty.
        \end{equation}
        Thus by Theorem \ref{balinsky2015} (see also\cite[Lemma, 5.6.1]{balinsky2015analysis}), $H_{A,Q}:=\big(\frac{1}{i}\nabla +A\big)^{2}-Q$ is lower semi-bounded, self-adjoint and has essential spectrum $[0,\infty)$.
        
        Defining $Q$ as in \eqref{eq:balinsky1} implies that $Q\geq V$ and so $H_{A,V}\geq H_{A,Q}$. We obtain that
        \begin{equation}\label{eq:balinsky3}
              Neg(H_{A,V})\leq Neg(H_{A,Q}).
        \end{equation}
        It is enough therefore to prove the theorem with $V$ replaced by $Q$. By \cite[Section 2]{adami1998aharonov} and \cite{laptev1999hardy}, we have that 
        \begin{equation}\label{eq:balinsky3a}
             L^{2}(\mathbb{R}^{2})=L^{2}(\mathbb{R}^{+};rdr)\otimes L^{2}(\mathbb{S}^{1})=\oplus_{m\in{\mathbb{Z}}}\big\{L^{2}(\mathbb{R}^{+};rdr)\otimes \big[ \frac{e^{im \Phi}}{\sqrt{2\pi}}\big] \big\},
        \end{equation}
        where $\big[.\big]$ denotes the linear span such that 
        \begin{equation}\label{eq:balinsky3b}
            H_{B}=\oplus_{m\in{\mathbb{Z}}}\big\{ D_{M}\otimes 1_{m}\big\},
        \end{equation}
        where $$D_{m}=-\frac{1}{r}\frac{d}{dr}(r\frac{d}{dr})+(\frac{m+\Phi}{r})^{2}$$ is the (Friedrichs) operator in $L^{2}((0,\infty),rdr)$ associated with the quadratic form 
        \begin{equation}\label{eq:balinsky6}
             h_{m}[u]:=\int_{0}^{\infty}\bigg(|u_{m}^{\prime}(r)|^{2}+\frac{(m+\Phi)^{2}}{r^{2}}|u_{m}(r)|^{2}\bigg)rdr.
        \end{equation}
        For any $u$ in $H_{A}^{1}$ the form domain of $H_{A,Q}$, we have
        \begin{equation}\label{eq:balinsky6a}
             u(r,\theta)=\sum_{m\in{\mathbb{Z}}}u_{m}(r)\big\{ \frac{e^{im \Phi}}{\sqrt{2\pi}} \big\},
        \end{equation} 
        where the Fourier coefficients $u_{m}(r)$,$m\in{\mathbb{Z}}$ are given by
        \begin{equation*}
             u_{m}(r)=\int_{0}^{2\pi}u(r,\theta)\big\{ \frac{e^{im \Phi}}{\sqrt{2\pi}} \big\}d\theta
        \end{equation*}
        and
        \begin{equation}\label{eq:balinsky4}
              \int_{\mathbb{R}^{2}}\bigg(|(\nabla +i A)u|^{2}-Q|u|^{2}\bigg)dx=\sum_{m\in{\mathbb{Z}}}\int_{0}^{\infty}\bigg(|u_{m}^{\prime}|^{2}+\frac{(m+\Phi)^{2}}{r^{2}}|u_{m}|^{2}-Q|u_{m}|^{2}\bigg)rdr.
        \end{equation}
        It follows that 
        \begin{equation}\label{eq:balinsky5}
              H_{A,Q}=\oplus_{m\in{\mathbb{Z}}}\big\{(D_{m}-Q)\otimes 1_{m} \big\},
        \end{equation}
        The negative spectrum of $H_{A,Q}$ is the aggregated of the negative eigenvalues of the operator $D_{m}-Q$. To complete the proof, we use the Bargmann estimate \cite[Estimate (3)]{bargmann1952number}
        \begin{equation}\label{eq:balinsky7}
        		Neg(D_{m}-Q)\leq \frac{1}{2|m+\Phi|}\int_{0}^{\infty}Q(r)rdr.
        \end{equation}
        By \eqref{eq:balinsky5}, \eqref{eq:balinsky7} yields 
        \begin{equation}\label{eq:balinsky8}
              Neg(H_{A,Q})\leq {\sum_{m\in{\mathbb{Z}}}}^{\prime}\frac{1}{2|m+\Phi|}\int_{0}^{\infty}Q(r)rdr.
        \end{equation}
        Hence \eqref{eq:review20} follows by replacing $Q$ by $V$ in \eqref{eq:balinsky8}.
     \end{proof}
     The above estimate  fails to hold as soon as the magnetic field is more regular and in addition the estimate obtained is not optimal,  where optimality means that $Neg(H_{A,V})$ has to be infinite if the right hand side of above estimate \eqref{eq:review20} is infinite, and vice versa.     
       
     Considering the Hamiltonian of a charged quantum particle in $\mathbb{R}^{2}$ interacting with a magnetic field $B=curl A$ given by \eqref{eq:ben3} in $L^{2}(\mathbb{R}^{2})$, Kovarik in \cite{kovarik2011eigenvalue} established an upper bound $Neg(H_{A,V})$ for the magnetic Schr\"{o}dinger operator with a reasonably large class of magnetic fields extending the result obtained by Balinsky et al \cite{balinsky2001number} to fields that are more regular. Kovarik achieved this by using the Beurling-Deny criteria (see \cite[Corrollary 2.18]{ouhabaz2009analysis}, or \cite{arendt2002semigroups})  and the Trotter formula to obtain a bound for the heat kernel generated by the Schr\"{o}dinger semigroup. The bound for the heat kernel obtained, a diamagnetic inequality \cite{hundertmark2004diamagnetic}, a magnetic Hardy inequality (see\cite[Theorem 1]{laptev1999hardy}) and a generalization of Liebs inequality (see \cite{lieb1976bounds} or \cite[Theorem 2.5]{rozenblum1998clr}), was used in \cite{kovarik2011eigenvalue} to obtain the following estimate from above for the number of negative eigenvalues for the operator $H_{A,V}$, $V\geq 0$.
     \begin{assumption}\label{assump1}
     	 Assume that there exists $\epsilon\in{(0,\frac{1}{2})}$, $A=A(\epsilon)$ and a finite number $M$ of open intervals $I_{j}=(\alpha_{j},\beta_{j})$, such that 
     	\begin{equation*}
     	\begin{split}
     	\{r>0:\min_{k\in{\mathbb{Z}}}|k-\phi(r)|<\epsilon\}\subset \cup_{j=1}^{M}I_{j},\\
     	\beta_{j-1}<\alpha_{j}<\beta_{j},\qquad j=1,\dots,M,\\
     	|I_{j}|\leq A\min\{1+\alpha_{j},\alpha_{j}-\beta_{j-1},\alpha_{j+1}-\beta_{j}\}, \qquad 1\leq j\leq M.
     	\end{split}
     	\end{equation*}
     \end{assumption}
     \begin{theorem}[{\cite[Theorem 3.4]{kovarik2011eigenvalue}}]\label{themkovarik2011}
     	Assume that $A\in{L_{loc}^{2}(\mathbb{R}^{2})}$ generates a magnetic field $B$ which satisfies  the above assumption. Let $0\leq V\in{L_{loc}^{2}(\mathbb{R}^{2})}\cap L^{1+a}(\mathbb{R}^{2}, (1+|x|)^{2a}dx)$ for some $a>0$. Then 
     	\begin{equation}\label{eq:review23}
     	Neg(H_{A,V})\leq c(B,a)\int_{\mathbb{R}^{2}}V(x)^{1+a}(1+|x|)^{2a}dx.
     	\end{equation}
     \end{theorem}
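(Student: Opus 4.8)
The plan is to follow the semigroup / heat-kernel route indicated in the text, combining the diamagnetic inequality (Theorem \ref{diamagnetic inequality}), the magnetic Hardy inequality \eqref{eq:mhardy5}, and the abstract CLR-type bound of Rozenblum (Theorem \ref{them:rozenblum1998(1)}). First I would fix $H_{A,V}=\big(\frac{1}{i}\nabla+A\big)^{2}-V$ as a form sum and record that its negative spectrum is discrete: since $V\in L^{1+a}(\mathbb{R}^{2},(1+|x|)^{2a}dx)$ and $A$ satisfies Assumption \ref{assump1}, the magnetic Hardy inequality shows that $V$ is form-bounded relative to $H_{B}=\big(\frac{1}{i}\nabla+A\big)^{2}$ with relative bound zero, so $H_{A,V}$ is well defined and $\sigma_{ess}(H_{A,V})=[0,\infty)$, exactly as in Theorem \ref{balinsky2015}. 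Thus $Neg(H_{A,V})$ counts genuine discrete eigenvalues below the essential spectrum, and the task reduces to a quantitative upper bound.

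The heart of the argument is an upper bound on the heat kernel $Q_{B}(t;x,y)$ of the semigroup $e^{-tH_{B}}$ that genuinely sees the magnetic field. The diamagnetic inequality alone gives only $|Q_{B}(t;x,y)|\le (4\pi t)^{-1}e^{-|x-y|^{2}/4t}$, hence $M_{B}(t)=O(t^{-1})$ with no field dependence, which is too weak in dimension two. To improve this I would use the Beurling-Deny criteria together with the Trotter product formula to realize $e^{-tH_{B}}$ as a positively dominated semigroup, and then feed in the magnetic Hardy inequality \eqref{eq:mhardy5}: the form lower bound $H_{B}\ge C_{H}|x|^{-2}$ with $C_{H}=\big(\min_{k\in\mathbb{Z}}|k-\phi(r)|\big)^{2}$ translates into extra on-diagonal decay of $Q_{B}(t;x,x)$. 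Assumption \ref{assump1} is precisely what makes $C_{H}$ usable: off the exceptional intervals $I_{j}$ the flux stays a fixed distance $\epsilon$ from $\mathbb{Z}$, so $C_{H}\ge\epsilon^{2}>0$, while on the $I_{j}$ the geometric conditions $|I_{j}|\le A\min\{1+\alpha_{j},\alpha_{j}-\beta_{j-1},\alpha_{j+1}-\beta_{j}\}$ let one glue the Hardy improvement across the ``bad'' annuli without losing the global estimate. The outcome should be an effective kernel estimate in which the spatial weight $(1+|x|)^{2a}$ emerges, together with an $M_{B}(t)$ that decays fast enough to satisfy the integrability condition \eqref{eq:review11} and the small-$t$ behaviour $M_{B}(t)=O(t^{\alpha})$.

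Finally I would insert this kernel bound into the abstract inequality \eqref{eq:review13} of Theorem \ref{them:rozenblum1998(1)}, choosing a convex $G$ vanishing near zero and growing like $z^{1+a}$ at infinity, so that $g=\mathcal{L}(G)$ is finite and the scaling matches the $L^{1+a}$ hypothesis on $V$. Carrying out the $t$-integration against $\frac{1}{t}M_{B}(t)G(tV(x))$ and using $G(tV)\lesssim (tV)^{1+a}$ then produces
\[
Neg(H_{A,V})\le c(B,a)\int_{\mathbb{R}^{2}}V(x)^{1+a}(1+|x|)^{2a}\,dx,
\]
with $c(B,a)$ absorbing $g(1)^{-1}$, the Hardy constant $\epsilon^{-2}$, and the geometric constant $A$ from Assumption \ref{assump1}. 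The main obstacle is unquestionably the second step: extracting a field-dependent heat-kernel bound, since the diamagnetic inequality by design erases the magnetic field, so one must instead convert the magnetic Hardy inequality into quantitative kernel decay and control the transition regions $I_{j}$ where the magnetic confinement degenerates. This is exactly where the full force of Assumption \ref{assump1} is needed, and where the weight $(1+|x|)^{2a}$ originates.
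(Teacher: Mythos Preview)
Your ingredients are right and you have correctly located the crux: the free diamagnetic bound $|Q_{B}(t;x,y)|\le (4\pi t)^{-1}$ is too weak in dimension two, and the magnetic Hardy inequality must be brought in to recover the missing decay. But the mechanism you describe---using Hardy to directly sharpen the on-diagonal behaviour of $e^{-tH_{B}}$---is not how the argument actually runs, and as stated it is not clear it can be made to work. A form lower bound $H_{B}\ge C_{H}|x|^{-2}$ does not by itself translate into a pointwise improvement of the heat kernel of $H_{B}$; you would still be comparing to an operator whose $M_{B}(t)$ is only $O(t^{-1})$ and hence fails the integrability condition \eqref{eq:review11}.

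The paper (following Kovar\'{\i}k) uses the Hardy inequality \emph{before} passing to heat kernels, at the level of quadratic forms. One introduces the auxiliary potentials $U_{\beta}(x)=\beta^{2}\min\{1,|x|^{-2}\}$ and the non-magnetic comparison operators $\mathcal{A}_{\beta}=-\Delta+U_{\beta}$. Writing $H_{B}=\epsilon H_{B}+(1-\epsilon)H_{B}$ and applying Hardy to the second piece (this is exactly where Assumption \ref{assump1} enters, to supply a uniform Hardy constant $C_{B}$) gives $H_{B}-V\ge \epsilon\big(H_{B}+U_{a}-\epsilon^{-1}V\big)$ with $a^{2}=\tfrac{1-\epsilon}{\epsilon}C_{B}$, hence $Neg(H_{A,V})\le Neg(H_{B}+U_{a}-\epsilon^{-1}V)$. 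Only now is diamagnetism invoked, to dominate the kernel of $e^{-t(H_{B}+U_{a})}$ by that of $e^{-t\mathcal{A}_{a}}$; and it is the latter, a purely non-magnetic Schr\"odinger heat kernel, that satisfies the weighted bound $K_{a}(t,x,x)\le C\min\{t^{-1},(1+|x|)^{2a}t^{-(1+a)}\}$. Feeding this into Lieb's inequality and integrating in $t$ produces the weight $(1+|x|)^{2a}$ and the exponent $1+a$. So the weight does not come from a field-dependent improvement of the magnetic heat kernel, but from the kernel of the auxiliary operator $-\Delta+U_{a}$; the role of the magnetic field is only to justify inserting $U_{a}$ for free via Hardy. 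Your outline is missing this intermediate comparison operator, which is the step that makes everything concrete.
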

     \begin{proof}
     	Let $0\leq \rho \leq 1, \rho \neq 0$ be a radial function from $C^{1}(\mathbb{R}^{2})$ with support in $\mathcal{B}_{1}=\{x\in{\mathbb{R}^{2}}:|x|<1\}$. Introduce a family of potential functions $U_{\beta}$ given by
     	\begin{equation}\label{eq:result2neg2}
     	U_{\beta}(x)=U_{\beta}(|x|)=\left\{
     	\begin{array}{rl}
     	\beta^{2}  \text{\;if\;}  |x|\leq 1,\\
     	\frac{\beta^{2}}{|x|^{2}}  \text{\;if \;} |x|>1,\\ 
     	\end{array} \right.
     	\end{equation} 
     	where $\beta>0$ and $U_{0}(x)=U_{0}(|x|)=\rho(|x|)$.
     	
     	Next we define Schr\"{o}dinger operators 
     	$$\mathcal{ A}_{\beta}=-\Delta+U_{\beta} \text{   in $L^{2}(\mathbb{R}^{2})$ }.$$
     	The operators $\mathcal{A}_{\beta}$ generates contraction semigroups $e^{-t\mathcal{A}_{\beta}}$ on $L^{2}(\mathbb{R}^{2})$ with almost everywhere positive integral kernels $K_{\beta}(t,x,y)=e^{-t\mathcal{A}_{\beta}}(x,y)$ where $t>0$.
     	For almost every $x,y\in{\mathbb{R}^{2}}$ and all $t>0$ we have
     	\begin{equation}\label{eq:result2neg3}
     		K_{\beta}(t,x,y)=e^{-t\mathcal{A}_{\beta}}(x,y)\leq C\min \{\frac{1}{t},\frac{(1+|x|)^{2\beta}}{t^{(1+\beta)}} \}, \beta>0,
     	\end{equation} 
     	for some constant $C$ \cite[Lemma 5.1]{kovarik2011eigenvalue}.
     	
     	Fix $a>0$, $C_{B}>0$ and choose $\epsilon>0$ such that 
     	\begin{equation}\label{eq:proofthemneg1}
     	a^{2}=\bigg(\frac{1-\epsilon}{\epsilon}\bigg)C_{B}.
     	\end{equation}
     	Let $\epsilon\in{(0,1)}$. By the Hardy type inequality \cite[Theorem 1]{laptev1999hardy} and the variational principle, we get  
     	\begin{equation}\label{eq:proofthemneg2}
     	Neg(H_{A,V})\leq Neg(H_{B}+U_{a}-\epsilon^{-1}V)     
     	\end{equation}
     	for each $a>0$, the operator $H_{B}+U_{a}$ generates a contractive semigroup $e^{-s(H_{B}+U_{a})} \text{\;in\;} L^{2}(\mathbb{R}^{2})$ where $s>0$. Let
     	\begin{equation}\label{eq:proofthemneg5}
     	K_{a}(s,x,y)=e^{-s(H_{B}+U_{a})}(x,y), \text{ $s>0$, }\qquad x,y\in{\mathbb{R}^{2}}
     	\end{equation}
     	be its integral kernel. By the diamagnetic inequality  \eqref{eq:diamagnetic1} (see \cite[Theorem 2.1.1]{fournais2010spectral} or \cite{hundertmark2004diamagnetic}) we have 
     	\begin{equation}\label{eq:proofthemneg6}
     	|K_{a}(s,x,y)|\leq K_{a}(s,x,y),\qquad a>0 \text{ almost everywhere}   \qquad x,y\in{\mathbb{R}^{2}}.
     	\end{equation}
     	By Lieb's inequality (see \cite{lieb1976bounds} or \cite[Theorem 2.5]{rozenblum1998clr}) we obtain
     	\begin{equation}\label{eq:proofthemneg7}
     	\begin{aligned}
     	Neg(H_{B}+U_{a}-\epsilon^{-1}V)&\leq C_{B}\int_{0}^{\infty}\int_{\mathbb{R}^{2}}\frac{1}{t}K_{a}(t,x,y)(tV(x)-1)dxdt\\
     	&\leq C_{a}\int_{\mathbb{R}^{2}}\int_{\frac{1}{V(x)}}^{\infty}K_{a}(t,x,y)(V(x))dtdx.
     	\end{aligned}
     	\end{equation}
     	Setting 
     	\begin{equation}\label{eq:proofthemneg8}
     	t_{a}(x)=e+\frac{1}{V(x)},
     	\end{equation}	
     	and perform the integration with respect to $t$ using the estimate \eqref{eq:result2neg3}, we shall  obtain that 
     	\begin{equation}\label{eq:proofthemneg10}
     	Neg(H_{B}+U_{a}-\epsilon^{-1}V)\leq C_{B}\int_{\mathbb{R}^{2}}\bigg\{\int_{\frac{1}{V(x)}}^{\infty}K_{a}(t,x,y)(V(x)dt\bigg\}dx
     	\end{equation}\label{eq:proofthemneg11}
     	Now let's consider  evaluation of the integral $\int_{\frac{1}{V(x)}}^{\infty}K_{a}(t,x,y)(V(x))dt $ using the estimate, 
     	\begin{equation}\label{eq:proofthemneg12}
     	K_{a}(t,x,y)=e^{-t\mathcal{A}_{a}}(x,y)\leq C\min \bigg\{\frac{1}{t},\frac{(1+|x|)^{2a}}{t^{(1+a)}} \bigg\}, a>0.
     	\end{equation}
     	and from \eqref{eq:proofthemneg8},  $t_{a}-e=\frac{1}{V}$ and  we  obtain 
     	\begin{equation}\label{eq:proofthemneg13}
     	\begin{aligned}
     	\int_{\frac{1}{V(x)}}^{\infty}K_{a}(t,x,y)V(x)dt&=V(x)\int_{t_{a}-e}^{\infty}K_{a}(t,x,y)dt\\
     	&=V(x)\int_{t_{a}-e}^{\infty}C\min \bigg\{\frac{1}{t},\frac{(1+|x|)^{2a}}{t^{(1+a)}} \bigg\}dt.
     	\end{aligned}
     	\end{equation}
     	Consider the integral
     	\begin{equation*}
     	\begin{aligned}
     	\int_{t_{a}-e}^{\infty}\frac{1}{t}dt&=\ln{t}|_{t_{a}-e}^{\infty}\\
     	&=\ln{\infty}-\ln({t_{a}-e})\\
     	&=\infty
     	\end{aligned}
     	\end{equation*}
     	and 
     	\begin{equation*}
     	\begin{aligned}
     	\int_{t_{a}-e}^{\infty}\frac{(1+|x|)^{2a}}{t^{(a+1)}}dt&=(1+|x|)^{2a}\int_{t_{a}-e}^{\infty}\frac{1}{t^{(a+1)}}dt\\
     	&=(1+|x|)^{2a}\int_{t_{a}-e}^{\infty}t^{-(a+1)}dt=(1+|x|)^{2a}\frac{t^{-a}}{-a}|_{t_{a}-e}^{\infty}\\
     	&=\frac{(1+|x|)^{2a}t^{-a}}{-a}|_{t_{a}-e}^{\infty}\\
     	&=\bigg( \frac{\infty^{-a}}{-a}(1+|x|)^{2a}\bigg)-\bigg(\frac{(1+|x|)^{2a}(t_{a}-e)^{-a}}{-a}\bigg)\\
     	&=\frac{-1}{a.\infty^{a}}(1+|x|)^{2a}+\frac{(1+|x|)^{2a}}{a(t_{a}-e)^{a}}=0+\frac{(1+|x|)^{2a}}{a(V(x))^{-a}}\\
     	\int_{t_{a}-e}^{\infty}\frac{(1+|x|)^{2a}}{t^{(a+1)}}dt&=\frac{(1+|x|)^{2a}}{a(V(x))^{-a}}.	
     	\end{aligned}
     	\end{equation*} 
     	It follows therefore that,
     	\begin{equation}\label{eq:proofthemneg14}
     	\begin{aligned}
     	\int_{\frac{1}{V(x)}}^{\infty}K_{a}(t,x,y)V(x)dt&=V(x).C\min \bigg\{\infty,\frac{(1+|x|)^{2a}}{a(V(x))^{-a}} \bigg\}\\
     	&=\frac{C.V(x)(1+|x|)^{2a}}{a(V(x))^{-a}}.
     	\end{aligned}
     	\end{equation}
     	Next we obtain
     	\begin{equation}\label{eq:proofthemneg15}
     	\begin{aligned}
     	Neg(H_{B}+U_{a}-\epsilon^{-1}V)&\leq C_{a}\int_{\mathbb{R}^{2}}\frac{C.V(x)(1+|x|)^{2a}}{a(V(x))^{-a}}dx.\\
     	&=\frac{C.C_{B}}{a}\int_{\mathbb{R}^{2}}(1+|x|)^{2a}(V(x))^{a+1}dx.
     	\end{aligned}
     	\end{equation}
     	Since $Neg(H_{A,V})\leq Neg (H_{B}+U_{a}-\epsilon^{-1}V)$. We shall obtain that,
     	\begin{equation}\label{eq:proofthemneg16}
     	Neg(H_{A,V})\leq C(B,a)\int_{\mathbb{R}^{2}}(1+|x|)^{2a}(V(x))^{a+1}dx.
     	\end{equation}
     	Hence 
     	\begin{equation}\label{eq:proofthemneg17}
     	Neg(H_{A,V})\leq C(B,a)\int_{\mathbb{R}^{2}}(1+|x|)^{2a}(V(x))^{a+1}dx.
     	\end{equation}
     \end{proof}
     
     An optimal estimate for the number of negative eigenvalues for the magnetic Schr\"{o}dinger operator for two dimensional case was obtained by Laptev in \cite{laptev2012spectral}.
     
     Consider in $L^{2}(\mathbb{R}^{2})$ the operator $H_{\beta}$ 
     \begin{equation}\label{eq:review24}
     H_{\beta}=(i\nabla+{\cal{A}}_{\beta})^{2}+V(x),
     \end{equation}
     where $V(x)=V(|x|)$ and the magnetic potential is given by 
     \begin{equation}\label{eq:review25}
     {A}_{\beta}=\beta\big(\frac{x_{2}}{|x|^{2}},-\frac{x_{1}}{|x|^{2}}\big), \qquad \beta\in{(0,1)}.
     \end{equation}
     By decomposing the operator $H_{\beta}$ to a one-dimensional case, let $\{-\lambda_{k}\}$ be the negative eigenvalues of the one-dimensional Schr\"{o}dinger operator obtained.
      
     Laptev in \cite{laptev2012spectral} proved the estimate by A.A. Balinsky,W.D.Evans and R.T.Lewis in \cite{balinsky2001number} with optimal constant by using the sharp inequality of Hundertmark-Lieb-Thomas (see \cite[Section 2, Theorem 1]{hundertmark2002sharp}, \cite[Theorem 5.1]{truc2012eigenvalue}), given by
     \begin{equation}\label{eq:review28}
     Neg(H_{\beta})\leq \frac{R(\beta)}{4\pi}\int_{\mathbb{R}^{2}}V(x)dx.
     \end{equation}
          
     Let $A=\sum_{j=1}^{2}A_{j}dx_{j}$ be the magnetic flux associated to the magnetic field $B=b(x)dx_{1}\wedge dx_{2}$, where $b(x)=\partial_{1}A_{2}(x)-\partial_{2}A_{1}(x)$, and $V$ a scalar potential on the unit disk
     \begin{equation}\label{eq:review29}
     \Omega:=\{x=(x_{1},x_{2})\in{\mathbb{R}^{2}}\text{ }|\text{ }x_{1}^{2}+x_{2}^{2}=r<1\}.
     \end{equation}
     such that $k=\inf_{x\in{\Omega}}b(x)>0$ and $b(x)\to +\infty$ as $x$ approaches the boundary. Then the two-dimensional magnetic Schr\"{o}dinger operator 
     \begin{equation}\label{eq:review29a}
     H_{A}^{D}=-\sum_{j=1}^{2}(\frac{\partial}{\partial x_{j}}-iA_{j})^{2}, \text{ in } L^{2}(\Omega),
     \end{equation}
     with a non-vanishing radial magnetic field and Dirichlet conditions at the boundary of the unit disk. Truc in \cite[Theorem 2.1]{truc2012eigenvalue} obtained the following estimate
     \begin{equation}\label{eq:review32}
     Neg(H_{A}^{D}-V)\leq \frac{1}{\sqrt{1-\alpha}}\int_{0}^{1}\big((\frac{1}{\alpha}-1)\frac{A^{2}(r)}{r^{2}}+V(r)\big)rdr+2\int_{0}^{1}\big(1+|\log{(r\sqrt{k})|}\big)V(r)rdr,
     \end{equation}
     for any $\alpha\in{(0,1)}$, $A(r)$ is the magnetic flux of the magnetic field through the disk of radius $r$ centered at the origin.
     
     The main tools used by Truc in \cite{truc2012eigenvalue}, were the sharp inequality of the Hundertmark-Lieb-Thomas (see \cite[Section 2, Theorem 1]{hundertmark2002sharp}) and the Birman Schwinger Principle. Truc in \cite{truc2012eigenvalue}, found out that the total flux is not necessary finite and the upper bound involves explicitly the square of the magnetic potential as opposed to the result obtained by Kovarik in \cite{kovarik2011eigenvalue} where  the total magnetic flux has to be finite and the dependence on the magnetic field is not explicit even in the radial case.
     
     Let $\Omega$ be an open set in $\mathbb{R}^{2}$. Let $A=(A_{1},A_{2}):\Omega\to \mathbb{R}^{2}$ be the magnetic potential, $V:\Omega \to \mathbb{R}$ be the electric potential and the magnetic field $B=\frac{\partial A_{2}}{\partial x_{1}}-\frac{\partial A_{1}}{\partial x_{2}}$. Then the two-dimensional magnetic Schr\"{o}dinger operator is given by 
     \begin{equation}\label{eq:review33a}
     H_{\Omega}(A,V)=(i\nabla +A)^{2}-V, \text{ on } L^{2}(\Omega).
     \end{equation}
     On $\partial \Omega$, Dirichlet boundary conditions are subjected. Let $\tilde{V}:\mathbb{R}_{+}\to \mathbb{R}$ be given by 
     \begin{equation}\label{eq:review33}
     \tilde{V}(r):=ess \sup_{\theta\in{[0,2\pi)}}V(r,\theta).
     \end{equation}
     Barseghyan and Schneider in  \cite{barseghyan2020eigenvalue} obtained the estimate 
     \begin{equation}\label{eq:review34}
     Neg(H_{\Omega}(A,V))\leq 1+c_{1}(B)\int_{0}^{1}\tilde{V}_{+}(r)(1+|\ln{r}|)rdr,
     \end{equation}
     for some positive constant $c_{1}=c_{1}(B)$ depending on the magnetic field $B$ (see \cite[Theorem 2.1]{barseghyan2020eigenvalue}).
      
     The knowledge gap in this study is the lack of optimal estimate for general magnetic field and limited literature on the estimates of number of negative eigenvalues for a magnetic Schr\"{o}dinger operator in a wave guide. Briet et al.\cite{briet2007spectral} gives spectral properties of a magnetic Quantum Hamiltonian on a strip not an estimate for the number of negative eigenvalues unlike the case for the electric Schr\"{o}dinger operator where estimates for the number of negative eigenvalues in the strip have been obtained  \cite{karuhanga2018eigenvalue,karuhanga2019discrete}.
\end{section}
\end{chapter}

\begin{chapter}{Magnetic Schr\"{o}dinger operator in a strip with Aharonov-Bohm magnetic field}
	
\begin{section}{Introduction}
	In this chapter, an upper estimate for the number of negative eigenvalues for the magnetic Schr\"{o}dinger operator with Aharonov-Bohm type magnetic field in a strip is presented. The estimate is given in terms of regularity properties of the electric potential.
		
	We consider the  magnetic Schr\"{o}dinger operator on $L^{2}(S)$ 
	\begin{equation}\label{eq:result1clr1}
	H_{B,V}=\bigg( \frac{1}{i}\nabla +A\bigg)^{2}-V, \qquad V\geq 0
	\end{equation}
	where $V$ is the electric potential, $S:=\{(x_{1},x_{2})\in{\mathbb{R}^{2}}:x_{1}\in{\mathbb{R}},0<x_{2}<d\}, d>0$ is an infinite straight strip,  and the $A$ is magnetic potential of Aharonov-Bohm field subject to Neumann boundary conditions.
	
	Without loss of generality, assume that  $V$ is non-negative, $A\in{L_{loc}^{2}(\mathbb{R}^{2})}$ and  $V\in{L_{loc}^{1}(\mathbb{R}^{2})}$. The magnetic potential $A$ is given by 
	\begin{equation}\label{eq:proofneg2}
	A=\Psi\bigg(\frac{-x_{2}}{x_{1}^{2}+x_{2}^{2}},\frac{x_{1}}{x_{1}^{2}+x_{2}^{2}}\bigg),
	\end{equation}
	where
	\begin{equation}\label{eq:proofneg3}
	\Psi(x_{1})=\frac{1}{d}\int_{0}^{d}A(x_{1},x_{2})dx_{2},
	\end{equation}
	is the magnetic flux through the strip.
		
	By the construction above, $A=A(x_{1},x_{2})=(a_{1}(x_{1},x_{2}),a_{2}(x_{1},x_{2}))$, where 
	\begin{equation*}
	a_{1}(x_{1},x_{2})=\Psi.\frac{-x_{2}}{x_{1}^{2}+x_{2}^{2}} \text{\; and\;} a_{2}(x_{1},x_{2})=\Psi.\frac{x_{1}}{x_{1}^{2}+x_{2}^{2}}
	\end{equation*}
	Now by definition,  $B(x_{1},x_{2})=curl A=\frac{\partial}{\partial{ x_{1}}} a_{2}(x_{1},x_{2})-\frac{\partial}{\partial{ x_{2}}}  a_{1}(x_{1},x_{2})$ 
	where
	\begin{equation*}
	\frac{\partial}{\partial{ x_{1}}} a_{2}(x_{1},x_{2}) =\frac{\partial}{\partial x_{1}}\bigg(\Psi.\frac{x_{1}}{x_{1}^{2}+x_{2}^{2}}\bigg)=\Psi.\frac{(x_{1}^{2}+x_{2}^{2}).1-x_{1}.2x_{1}}{(x_{1}^{2}+x_{2}^{2})^{2}}=\Psi.\frac{x_{2}^{2}-x_{1}^{2}}{(x_{1}^{2}+x_{2}^{2})^{2}} \text{ and}
	\end{equation*}
	\begin{equation*}
	\frac{\partial}{\partial{ x_{2}}}  a_{1}(x_{1},x_{2})=\frac{\partial}{\partial x_{1}}\bigg(\Psi.\frac{-x_{2}}{x_{1}^{2}+x_{2}^{2}}\bigg)=-\Psi.\bigg(\frac{(x_{1}^{2}+x_{2}^{2}).1-x_{2}.2x_{2}}{(x_{1}^{2}+x_{2}^{2})^{2}}\bigg)=\Psi.\frac{x_{2}^{2}-x_{1}^{2}}{(x_{1}^{2}+x_{2}^{2})^{2}}
	\end{equation*}
	It therefore follows that
	$curlA=\Psi.\bigg(\frac{x_{2}^{2}-x_{1}^{2}}{(x_{1}^{2}+x_{2}^{2})^{2}}-\frac{x_{2}^{2}-x_{1}^{2}}{(x_{1}^{2}+x_{2}^{2})^{2}}\bigg)=0$ thus, $B(x_{1},x_{2})=0$.\\
	In addition,  
	\begin{equation*}
	A.x=\begin{pmatrix}	\Psi\frac{-x_{2}}{x_{1}^{2}+x_{2}^{2}}\\\Psi\frac{x_{1}}{x_{1}^{2}+x_{2}^{2}}	\end{pmatrix}.\begin{pmatrix}x_{1}\\x_{2}\end{pmatrix}=\Psi\frac{-x_{1}x_{2}}{x_{1}^{2}+x_{2}^{2}}+\Psi\frac{x_{1}x_{2}}{x_{1}^{2}+x_{2}^{2}}=0.
	\end{equation*} 
	Hence $A$ is a magnetic vector potential of the Aharonov-Bohm type (see \cite[Section 5.4]{balinsky2015analysis}).
		
	With $\nabla_{A}:=(\frac{1}{i}\nabla +A)$ and $\Delta_{A}:=(\frac{1}{i}\nabla +A)^{2}$, one can perform a gauge transformation, that is conjugation by $e^{i\Psi}$. Then, since $e^{-i\Psi}\nabla_{A}e^{i\Psi}=\nabla_{A}+\nabla \Psi$, we get the unitary equivalence of $H_{B,V}:=(\frac{1}{i}\nabla +A)^{2}-V$ and $H_{B+\nabla \Psi,V}:=(\frac{1}{i}\nabla +A+\nabla \Psi)^{2}-V$. 
	
	Notice that $curl \nabla \Psi=0$, so the magnetic field is unchanged by the change of gauge.
	
	Let 
	\begin{equation}\label{eq:sae2}
	W(x_{1})=\|V(x_{1},.)\|_{L^{\infty }(0,d)}=\sup_{0<x_{2}\leq d}|V(x_{1})|
	\end{equation} 
	so that
	\begin{equation}\label{eq:sae3}
	\|W\|_{L^{1}(\mathbb{R},L^{\infty}(0,d))}=\|V\|_{L^{1}(\mathbb{R},L^{\infty}(0,d))}<\infty.
	\end{equation}
	The operator $H_{B,W}=(\frac{1}{i}\nabla+A)^{2}-W$, is lower-semi bounded self adjoint operator with essential spectrum $[0,\infty)$ \cite{balinsky2001number}.	Since $H_{B,V}\geq H_{B,W}$, then
	\begin{equation}\label{eq:sae3a1}
	Neg(H_{B,V})\leq Neg(H_{B,W}),
	\end{equation} 
	where $Neg(.)$ is the number of negative eigenvalues of the operator. Therefore to estimate the number of negative eigenvalues of the operator $H_{B,V}$, its enough to estimate the number of negative eigenvalues of the operator $H_{B,W}$.
	
	Now,
	\begin{equation*}
	\begin{aligned}
	H_{B,W}&=(\frac{1}{i}\nabla+A)^{2}-W\\
	&=-\Delta +\frac{2A}{i}.\nabla+|A|^{2}-W\\
	&=-\frac{\partial^{2}}{\partial x_{1}^{2}}-\frac{\partial^{2}}{\partial x_{2}^{2}}+\frac{2A}{i}.\bigg(\frac{\partial}{\partial x_{1}}+\frac{\partial}{\partial x_{2}}\bigg)+|A|^{2}-W\\
	&=-\frac{\partial^{2}}{\partial x_{1}^{2}}+\frac{2A}{i}\frac{\partial}{\partial x_{1}}-\frac{\partial^{2}}{\partial x_{2}^{2}}+\frac{2A}{i}\frac{\partial}{\partial x_{2}}+|A|^{2}-W.
	\end{aligned}
	\end{equation*}
	Thus 
	\begin{equation}\label{eq:sae4}
	H_{B,W}=-\frac{\partial^{2}}{\partial x_{1}^{2}}+\frac{2A}{i}\frac{\partial}{\partial x_{1}}+\bigg(-\frac{\partial^{2}}{\partial x_{2}^{2}}+\frac{2A}{i}\frac{\partial}{\partial x_{2}}+|A|^{2}\bigg)-W.
	\end{equation}
	Since $-\frac{\partial^{2}}{\partial x_{2}^{2}}+\frac{2A}{i}\frac{\partial}{\partial x_{2}}+|A|^{2}=\bigg(\frac{1}{i}\frac{\partial }{\partial x_{2}}+A\bigg)^{2}$, \eqref{eq:sae4} can now be written as,
	\begin{equation}\label{eq:sae5}
	H_{B,W}=-\frac{\partial^{2}}{\partial x_{1}^{2}}+\frac{2A}{i}\frac{\partial}{\partial x_{1}}+\bigg(\frac{1}{i}\frac{\partial }{\partial x_{2}}+A \bigg)^{2}-W.
	\end{equation}
	Define the operator 
	\begin{equation}\label{eq:sae6}
	\mathcal{Q}:=\frac{1}{i}\frac{\partial }{\partial x_{2}}+A.
	\end{equation}
	Let $\lambda $ be an eigenvalue of $\mathcal{Q}$ and $f$ be the corresponding eigenfunction . Then
	\begin{equation}\label{eq:sae7}
	\mathcal{Q}f=\frac{1}{i}\frac{\partial f }{\partial x_{2}}+Af=\lambda f,
	\end{equation}
	So that
	\begin{equation*}
	\frac{1}{i}\frac{df}{dx_{2}}=(\lambda -A)f.
	\end{equation*} 
	On separating variables and integrating, we obtain
	\begin{equation*}
	\int \frac{1}{f}df=\int i(\lambda-A)dx_{2},
	\end{equation*}
	implying that
	\begin{equation}\label{eq:sae8}
	\ln f=i(\lambda-A)x_{2}+c_{1},
	\end{equation}
	where $c_{1}$, is a constant of integration, 
	\begin{equation}\label{eq:sae9}
	f(x_{2})=ce^{i(\lambda-A)x_{2}}, c=\ln c_{1}
	\end{equation}
	To determine $c$, $\lambda$ and $f$, we use the following boundary conditions.
	\begin{equation}\label{eq:sae10}
	-f^{\prime}(0)=f^{\prime}(d)=0.
	\end{equation} 
	Differentiating \eqref{eq:sae9} with respect to $x_{2}$ gives
	\begin{equation}\label{eq:sae11}
	f^{\prime}(x_{2})=i(\lambda-A)ce^{i(\lambda-A)x_{2}}.
	\end{equation}
	Now putting the boundary conditions \eqref{eq:sae10} into \eqref{eq:sae11}, we obtain
	\begin{equation*}
	i(A-\lambda)c=0
	\end{equation*}
	and
	\begin{equation*}
	i(\lambda-A)ce^{i(\lambda-A)d}=0.
	\end{equation*}
	It follows that
	\begin{equation}\label{eq:sae12}
	\lambda=A \text{ or } e^{i(\lambda-A)d}=0.
	\end{equation}
	Since $e^{i(\lambda-A)d}\ne 0$, then $\lambda=A$. So from \eqref{eq:sae9}, we have $f(x_{2})=c$. Normalizing $f(x_{2})=c$ gives $c=\frac{1}{\sqrt{d}}$.
	
	Define $\varphi$ by
	\begin{equation}\label{eq:varphi}
	\varphi=\min_{k\in{\mathbb{Z}}}|\Psi -k|
	\end{equation}
	Without loss of generality,  assume that $A=\varphi$, where $\varphi$ is defined in \eqref{eq:varphi}. Then we shall obtain that 
	\begin{equation}\label{eq:sae15}
	f_{k}(x_{2})=\frac{1}{\sqrt{d}}e^{i(\lambda_{k}-\varphi)x_{2}}.
	\end{equation}
	Using the decomposition
	\begin{equation*}
	L^{2}(S)=L^{2}(\mathbb{R}, dx_{1})\otimes L^{2}(0,d),
	\end{equation*}
	we have
	\begin{equation}\label{eq:sae16}
	L^{2}(S)=\bigoplus_{k\in{\mathbb{Z}}}\bigg(L^{2}(\mathbb{R}, dx_{1})\otimes\bigg[\frac{e^{i(\lambda_{k}-\varphi)x_{2}}}{\sqrt{d}} \bigg]\bigg),
	\end{equation}
	where $[.]$ denotes the linear span, since $\bigg\{e^{i(\lambda_{k}-\varphi)x_{2}}\bigg\}_{k\in{\mathbb{Z}}}$ is a complete orthonormal system of eigenfunctions in $L^{2}(0,d)$. It follows from \eqref{eq:sae5} that     
	\begin{equation}\label{eq:sae17}
	H_{B,W}=-\frac{\partial^{2}}{\partial x_{1}^{2}}+\frac{2\varphi}{i}\frac{\partial}{\partial x_{1}}+(k+\varphi)^{2}-W=\bigoplus_{k\in{\mathbb{Z}}}\bigg(H_{k}\otimes 1_{k}\bigg),
	\end{equation}
	where 
	\begin{equation}\label{eq:sae18}
	H_{k}=-\frac{\partial^{2}}{\partial x_{1}^{2}}+\frac{2\varphi}{i}\frac{\partial}{\partial x_{1}}+(k+\varphi)^{2}-W.
	\end{equation}
	and $1_{k}$ is the identity on \eqref{eq:sae15}.
	
	For any $u\in{L^{2}(0,d)}$, $u$ has the representation 
	\begin{equation*}
	u(x_{1},x_{2})=\sum_{k\in{\mathbb{Z}}}u_{k}(x_{1})f_{k}(x_{2}),
	\end{equation*}
	where 
	\begin{equation*}
	u_{k}(x_{1})=\int_{0}^{d}u(x_{1},x_{2})f_{k}(x_{2})dx_{2}.
	\end{equation*}
	Thus for $u\in{H_{A}^{1}}$, the quadratic form of \eqref{eq:sae18} given by
	\begin{equation*}
	\begin{aligned}
	q_{k,W}[u]:=\sum_{k\in{\mathbb{Z}}}&\langle -u_{k}^{\prime \prime}(x_{1})f_{k}(x_{2})+\frac{2\varphi}{i}u_{k}^{\prime}(x_{1})f_{k}(x_{2})+(k+\varphi)^{2}u_{k}(x_{1})f_{k}(x_{2})\\
	&-Wu_{k}(x_{1})f_{k}(x_{2}),u_{k}(x_{1})f_{k}(x_{2})\rangle_{L^{2}(0,d)}.
	\end{aligned}
	\end{equation*}
	Thus
	\begin{equation*}
	\begin{aligned}
	q_{k,W}[u]&=\sum_{k\in{\mathbb{Z}}}\bigg(-\langle u_{k}^{\prime \prime}(x_{1})f_{k}(x_{2}),u_{k}(x_{1})f_{k}(x_{2})\rangle +\frac{2\varphi}{i}\langle u_{k}^{\prime}(x_{1})f_{k}(x_{2}),u_{k}(x_{1})f_{k}(x_{2})\rangle \\
	&+(k+\varphi)^{2}\langle u_{k}(x_{1})f_{k}(x_{2}),u_{k}(x_{1})f_{k}(x_{2})\rangle-W\langle u_{k}(x_{1})f_{k}(x_{2}),u_{k}(x_{1})f_{k}(x_{2})\rangle \bigg)\\
	&=\sum_{k\in{\mathbb{Z}}}\bigg(-\int_{S}u_{k}^{\prime \prime}(x_{1})f_{k}(x_{2})\overline{u_{k}(x_{1})f_{k}(x_{2})}dx_{1}dx_{2}+\frac{2\varphi}{i}\int_{S}u_{k}^{\prime}(x_{1})f_{k}(x_{2})\overline{u_{k}(x_{1})f_{k}(x_{2})}dx_{1}dx_{2}\\
	&+(k+\varphi)^{2}\int_{S}u_{k}(x_{1})f_{k}(x_{2})\overline{u_{k}(x_{1})f_{k}(x_{2})}dx_{1}dx_{2}-W\int_{S}u_{k}(x_{1})f_{k}(x_{2})\overline{u_{k}(x_{1})f_{k}(x_{2})}dx_{1}dx_{2} \bigg).
	\end{aligned}
	\end{equation*} 
	It therefore follows that
	\begin{equation}\label{eq:sae18a}
	q_{k,W}[u]=\sum_{k\in{\mathbb{Z}}}\bigg( \int_{\mathbb{R}}\bigg(|u_{k}^{\prime}(x_{1})|^{2}+|k+\varphi|
	^{2}|u_{k}(x_{1})|^{2}-W|u_{k}(x_{1})|^{2}\bigg)dx_{1}\bigg)
	\end{equation}	
\end{section}
\newpage
\begin{section}{The main result}\label{sect4.2} 
	\begin{theorem}[ cf.Theorem \ref{thembalinsky2001}]\label{themneg1}
		Let  $X=L^{1}(\mathbb{R},L^{\infty}(0,d))$, $A$ be the vector potential in \eqref{eq:proofneg2}, $V$ be integrable on bounded subsets of $S$ and $V\in{X}$. Then $Neg(H_{B,V})$ is bounded above by 
		\begin{equation}\label{eq:result1clr4}
		Neg(H_{B,V})\leq  {\sum}^{\prime}_{k\in{\mathbb{Z}}}\frac{1}{\sqrt{16|k+\varphi|^{2}+1}}\|V\|_{X},
		\end{equation}
		where ${\sum}^{\prime}$ indicates that all summands less than $1$ are neglected. 
	\end{theorem}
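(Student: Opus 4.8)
The plan is to exploit the fiber decomposition already prepared in \eqref{eq:sae17}--\eqref{eq:sae18a}. Since $W(x_{1})=\|V(x_{1},\cdot)\|_{L^{\infty}(0,d)}\ge V(x_{1},x_{2})$, monotonicity of the Morse index gives $Neg(H_{B,V})\le Neg(H_{B,W})$, as recorded in \eqref{eq:sae3a1}, so it suffices to estimate $Neg(H_{B,W})$. After gauging away the constant first--order term $\tfrac{2\varphi}{i}\partial_{x_{1}}$, the operator splits as an orthogonal sum $H_{B,W}=\bigoplus_{k\in\mathbb{Z}}(H_{k}\otimes 1_{k})$ with $H_{k}=-\tfrac{d^{2}}{dx_{1}^{2}}+(k+\varphi)^{2}-W$ on $L^{2}(\mathbb{R},dx_{1})$, whose quadratic form is exactly \eqref{eq:sae18a}. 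Because the summands act on mutually orthogonal subspaces, $Neg(H_{B,W})=\sum_{k\in\mathbb{Z}}Neg(H_{k})$, and the whole problem reduces to a one--dimensional eigenvalue count for each fixed $k$, followed by a summation over $k$.

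First I would fix $k$ and bound $Neg(H_{k})$ by a Bargmann--type argument modelled on Lemma \ref{lembargman1}. Suppose $H_{k}$ has $n$ negative eigenvalues. By the Sturm oscillation principle (Lemma \ref{lem3}) the zero--energy equation $-u''+(k+\varphi)^{2}u-Wu=0$ then possesses a real solution with $n+1$ zeros on $\mathbb{R}$. On each of the $n$ intervals between consecutive zeros I would introduce a Riccati/Pr\"ufer variable, follow it from $+\infty$ to $-\infty$, and locate the turning values determined by the transverse level $(k+\varphi)^{2}$; integrating the resulting Riccati differential inequality over the sub--interval on which the variable lies between these turning values produces a lower bound for $\int W$ on that sub--interval. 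Summing over the $n$ pairwise disjoint gaps should yield $\int_{\mathbb{R}}W\,dx_{1}>n\sqrt{16|k+\varphi|^{2}+1}$, that is $Neg(H_{k})\le\|W\|_{L^{1}(\mathbb{R})}/\sqrt{16|k+\varphi|^{2}+1}$, which matches Lemma \ref{lembargman1} under the effective coupling $c=4|k+\varphi|^{2}$.

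Finally I would sum the fiber estimates over $k\in\mathbb{Z}$ and invoke \eqref{eq:sae3}, namely $\|W\|_{L^{1}(\mathbb{R})}=\|V\|_{L^{1}(\mathbb{R},L^{\infty}(0,d))}=\|V\|_{X}$, to obtain $Neg(H_{B,V})\le\sum_{k\in\mathbb{Z}}\tfrac{1}{\sqrt{16|k+\varphi|^{2}+1}}\|V\|_{X}$; since each $Neg(H_{k})$ is a nonnegative integer, every fiber whose bound is smaller than $1$ contributes nothing, which is precisely the restricted sum ${\sum}'$ appearing in \eqref{eq:result1clr4}. I expect the per--fiber step to be the main obstacle: the fiber operator carries a \emph{constant} transverse potential $(k+\varphi)^{2}$ rather than the inverse--square potential $c/x^{2}$ of Lemma \ref{lembargman1}, so the identification $c=4|k+\varphi|^{2}$ and the precise constant $\sqrt{16|k+\varphi|^{2}+1}$ must be extracted directly from the Riccati analysis rather than by a literal appeal to that lemma. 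One must also confirm, as in Theorem \ref{thembalinsky2001}, that $\sigma_{ess}(H_{B,W})=[0,\infty)$, so that the negative spectrum is genuinely discrete and $Neg$ counts eigenvalues with multiplicity.
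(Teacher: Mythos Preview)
Your overall architecture---reduce to $W$ via \eqref{eq:sae3a1}, decompose into fibers $H_{k}$ via \eqref{eq:sae17}--\eqref{eq:sae18a}, bound each $Neg(H_{k})$ by a Bargmann-type argument, then sum---is exactly what the paper does. The divergence is at the per-fiber step, and the obstacle you flag is real and unresolved in your outline.

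The paper does \emph{not} run the Riccati analysis directly on the fiber $-u''+(k+\varphi)^{2}u-Wu$. Instead, before invoking Lemma~\ref{lembargman1} it inserts the one-dimensional Hardy inequality \eqref{eq:onedimhardy1} into the quadratic form \eqref{eq:sae18a}, trading the constant transverse level $(k+\varphi)^{2}$ for an inverse-square potential: this produces the auxiliary operator
\[
\tau(k)=\tfrac14\Bigl(-\dfrac{d^{2}}{dx_{1}^{2}}+\dfrac{4(k+\varphi)^{2}}{x_{1}^{2}}-4W\Bigr),
\]
to which Lemma~\ref{lembargman1} applies \emph{literally} with $c=4(k+\varphi)^{2}$, yielding the factor $\sqrt{4c+1}=\sqrt{16|k+\varphi|^{2}+1}$ in \eqref{eq:proofneg105b}. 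That Hardy step is the mechanism by which the specific constant in \eqref{eq:result1clr4} appears.

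Your alternative---a direct Pr\"ufer/Riccati argument on $-u''+(k+\varphi)^{2}u-Wu=0$ with $z=u'/u$---is perfectly sound, but it does not deliver that constant. The turning values of $z$ are $\pm|k+\varphi|$, so on each nodal gap one only obtains $\int W\ge 2|k+\varphi|$, hence $Neg(H_{k})\le \dfrac{1}{2|k+\varphi|}\|W\|_{L^{1}}$. Since $\sqrt{16m^{2}+1}>4m>2m$, this is strictly weaker than the bound asserted in \eqref{eq:result1clr4}; no choice of Riccati variable for the \emph{constant}-potential equation will manufacture the inverse-square structure that produces $\sqrt{4c+1}$. So as written your plan proves a correct inequality, but not the theorem's stated constant: to match the paper you must insert the Hardy step (or an equivalent comparison) that converts $(k+\varphi)^{2}$ into $4(k+\varphi)^{2}/x_{1}^{2}$ before invoking the Bargmann lemma.
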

   
   \begin{proof}
   	The operator $H_{B,W}$ is defined via its quadratic form \eqref{eq:sae18a}, that is, 
   	\begin{equation}\label{eq:proofneg8}
   	q_{k,W}[u]=\sum_{k\in{\mathbb{Z}}}\bigg(\int_{\mathbb{R}}\bigg( |u^{\prime}_{k}(x_{1})|^{2}+(k+\varphi)^{2}|u_{k}(x_{1})|^{2}-W(x_{1})|u_{k}(x_{1})|^{2}\bigg)dx_{1}\bigg).
   	\end{equation} 
   	By the Hardy inequality \eqref{eq:onedimhardy1} (also \cite[ Theorem 327]{hardy1952inequalities}), we have
   	\begin{equation*}
   	q_{k,W}[u]\geq \sum_{k\in{\mathbb{Z}}}\int_{\mathbb{R}}\bigg(\frac{1}{4} |u^{\prime}(x_{1})|^{2}+ \frac{(k+\varphi)^{2}}{x_{1}^{2}}|u(x_{1})|^{2}-W(x_{1})|u_{k}(x_{1})|^{2}\bigg)dx_{1} 
   	\end{equation*}
   	\begin{equation}\label{eq:proofneg101}
   	=\frac{1}{4} \sum_{k\in{\mathbb{Z}}}\int_{\mathbb{R}}\bigg(|u^{\prime}(x_{1})|^{2}+ 4\frac{(k+\varphi)^{2}}{x_{1}^{2}}|u(x_{1})|^{2}-4W(x_{1})|u(x_{1})|^{2}\bigg)dx_{1}.
   	\end{equation} 
   	The right hand of \eqref{eq:proofneg101} is a quadratic form associated with a family of Sturm-Liouville operators of the form
   	\begin{equation}\label{eq:proofneg105}
   	\tau(k):=\frac{1}{4}\bigg(-\frac{d^{2}}{dx_{1}^{2}}+\frac{4(k+\varphi)^{2}}{x_{1}^{2}}-4W(x_{1})\bigg) \text{  on $L^{2}(\mathbb{R})$}.
   	\end{equation}
   	Now, it follows from \eqref{eq:bargmann3} that 
   	\begin{equation}\label{eq:proofneg105b}
   	Neg(\tau(k))\leq\frac{1}{\sqrt{16(k+\varphi)^{2}+1}}\int_{\mathbb{R}}W(x_{1})dx_{1}	
   	\end{equation}   	
   	Hence, by \eqref{eq:sae3a} and  \eqref{eq:proofneg101}, we have
   	\begin{equation}\label{eq:proofneg105c}
   	Neg(H_{B,W})\leq{\sum}^{\prime}_{k\in{\mathbb{Z}}}\frac{1}{\sqrt{16|k+\varphi|^{2}+1}}\int_{\mathbb{R}}W(x_{1})dx_{1}.	
   	\end{equation}
   	Substituting for $W$ (see \eqref{eq:sae3a1}) in \eqref{eq:proofneg105c} yields
   	\begin{equation*}\label{eq:proofneg15}
   	\begin{aligned}
   	Neg(H_{B,V})&\leq{\sum}^{\prime}_{k\in{\mathbb{Z}}}\frac{1}{\sqrt{16(k+\varphi)^{2}+1}}\int_{\mathbb{R}}ess\sup_{0<x_{2}\leq d}V(x_{1},x_{2})dx_{1}\\
   	&={\sum}^{\prime}_{k\in{\mathbb{Z}}}\frac{1}{\sqrt{16|k+\varphi|^{2}+1}}\|V\|_{X}.
   	\end{aligned}
   	\end{equation*}
   \end{proof}	
\end{section}
\begin{section}{Dependence of the estimate on the magnetic field.}\label{sect4.4}
	The estimate \eqref{eq:result1clr4} will only hold if the magnetic field is present. In the absence of the magnetic field the operator $H_{BV}$ reduces to the usual electric two dimensional Schr\"{o}dinger operator $H_{V}=-\Delta-V$, $V\geq 0$ on $L^{2}(S)$.	One of the reasons why \eqref{eq:result1clr4} fails is that the Hardy inequality does not hold even for radially symmetric functions.
	
	For example, take $u\in{C_{0}^{\infty}(0,\infty)}$ such that $u(r)=\ln{\ln{\frac{1}{r}}}$ for $0<r<\frac{1}{e}$. then the inequality 
	\begin{equation}\label{eq:dis1}
	\int_{0}^{\infty}|u(r)|^{2}\frac{dr}{r}\leq c\int_{0}^{\infty}|u^{\prime}(r)|^{2}dr.
	\end{equation}
	clearly fails. The implication of this is that \eqref{eq:proofneg101} in the proof of theorem \eqref{themneg1} fails, and consequently \eqref{eq:result1clr4} does not hold.
	
	Therefore theorem \eqref{themneg1} holds true only in presence of the magnetic field which does not concentrate at integer points.	
\end{section}	
\end{chapter}
\begin{chapter}{Conclusion and Recommendations}
\begin{section}{Conclusion}
	In this study, the spectral estimate of the magnetic Schr\"{o}dinger operator with Aharonov-Bohm magnetic field in a straight strip subject to Neumann boundary conditions has been obtained. In particular, we have obtained an upper estimate \eqref{eq:result1clr4} for the number of negative eigenvalues for the operator \eqref{eq:result1clr1}. We have shown in \S \eqref{sect4.4} that the upper estimate \eqref{eq:result1clr4} depends on the properties of the magnetic field $B$ and that such an estimate fails in the absence of magnetic field and only holds in the presence of the Aharonov-Bohm magnetic field with the flux satisfying the non integer condition.
\end{section}
\begin{section}{Recommendations}
	There are several directions in which one can extend the above results, for instance, treating the problem using Numerical techniques, one can investigate whether mixed (Dirichlet, Neumann and Robin) boundary conditions applied on the strip can generate similar results.
	
	An open problem here concerns the optimal constant (what is the best constant?) in the estimate. In an attempt to answer this question, one has two options; one is either treating it analytically or using Numerical techniques.
	
	Also, one can extend the above results to a curved quantum waveguide with magnetic fields, where in this case the curvature also contributes to the electric potential.
	
	Finally, instead of restricting the magnetic field to non-integer flux values, it would be interesting to consider a reasonably large class of fields by removing the singularities at integer points.  
\end{section}

\end{chapter}
\newpage

\end{document}